\documentclass[twoside,leqno,twocolumn]{article}
\usepackage[backend=biber,
            style=numeric-comp,
            natbib=true,
            maxbibnames=10,
            minalphanames=3,
            maxcitenames=2,
            sorting=ynt,
            sortcites=true]{biblatex}
\addbibresource{ref.bib}
\usepackage[letterpaper]{geometry}

\usepackage{ltexpprt}

\usepackage{macro_math}
\newcommand{\set}[1]{\{#1\}}

\renewcommand{\epsilon}{\varepsilon}

\usepackage{graphicx}
\usepackage{subfloat}
\usepackage{caption,subcaption}
\usepackage{algorithm,algorithmicx,algpseudocode}
\usepackage{color}
\usepackage{bold-extra}
\usepackage{enumitem}
\usepackage{tabularx}

\newcommand{\latent}{\delta_E}

\newcommand{\recover}{\delta_I}

\algnewcommand\algorithmicinput{\textbf{Input:}}
\algnewcommand\Input{\item[\algorithmicinput]}
\algnewcommand\algorithmicoutput{\textbf{Output:}}
\algnewcommand\Output{\item[\algorithmicoutput]}
\algnewcommand\algorithmicparam{\textbf{Parameters:}}
\algnewcommand\Param{\item[\algorithmicparam]}
\newcommand{\LPshort}{{\sc Frank-Wolfe-EC}}
\newcommand{\edgecenshort}{{\sc Top-k-EC}}
\newcommand{\LP}{{\sc Frank-Wolfe-EdgeCentrality}}
\newcommand{\LPTV}{{\sc Frank-Wolfe-TimeVarying}}
\newcommand{\LPTVshort}{{\sc Frank-Wolfe-TV}}

\newcommand{\edgecen}{{\sc Top-k-EdgeCentrality}}
\newcommand{\edgecenTV}{{\sc Top-k-TimeVarying}}

\makeatletter
\def\thanksnosymbol#1{\protected@xdef\@thanks{\@thanks
        \protect\footnotetext{#1}}}
\makeatother
\usepackage{enumitem}
\begin{document}

\newcommand\relatedversion{}
\renewcommand\relatedversion{\thanks{The full version of the paper can be accessed at \protect\url{}}} %

\title{Optimal Intervention on Weighted Networks via Edge Centrality}
\author{Dongyue Li\thanks{Northeastern University, Boston, MA. Email correspondence can be directed to all authors at $\langle${li.dongyu, t.eliassirad, ho.zhang$\rangle$@northeastern.edu}.}
\and Tina Eliassi-Rad\footnotemark[1]
\and Hongyang R. Zhang\footnotemark[1]
}
\date{}

\maketitle

 \fancyfoot[R]{\scriptsize{Copyright \textcopyright\ 2023 by SIAM\\
 Unauthorized reproduction of this article is prohibited}}

\pagenumbering{arabic}
\setcounter{page}{1} %
\begin{abstract}
Suppose there is a spreading process such as an infectious disease propagating on a graph. How would we reduce the number of affected nodes in the spreading process? This question appears in recent studies about implementing mobility interventions on mobility networks (Chang et al.\;(2021)). A practical algorithm to reduce infections on unweighted graphs is to remove edges with the highest edge centrality score (Tong et al.\;(2012)), which is the product of two adjacent nodes' eigenscores. However, mobility networks have weighted edges; Thus, an intervention measure would involve edge-weight reduction besides edge removal. Motivated by this example, we revisit the problem of minimizing top eigenvalue(s) on weighted graphs by decreasing edge weights up to a fixed budget. We observe that the edge centrality score of Tong et al.\;(2012) is equal to the gradient of the largest eigenvalue of $WW^{\top}$, where $W$ denotes the weight matrix of the graph. We then present generalized edge centrality scores as the gradient of the sum of the largest $r$ eigenvalues of $WW^{\top}$. With this generalization, we design an iterative algorithm to find the optimal edge-weight reduction to shrink the largest $r$ eigenvalues of $WW^{\top}$ under a given edge-weight reduction budget. We also extend our algorithm and its guarantee to time-varying graphs, whose weights evolve over time. We perform a detailed empirical study to validate our approach. Our algorithm significantly reduces the number of infections compared with existing methods on eleven weighted networks. Further, we illustrate several properties of our algorithm, including the benefit of choosing the rank $r$, fast convergence to global optimum, and an almost linear runtime per iteration.
\end{abstract}

\maketitle
\section{Introduction}

\begin{figure*}[!t]
    \hspace{0.06\textwidth}
  \begin{subfigure}[t]{0.48\textwidth}
    \centering
    \includegraphics[width=0.9\textwidth]{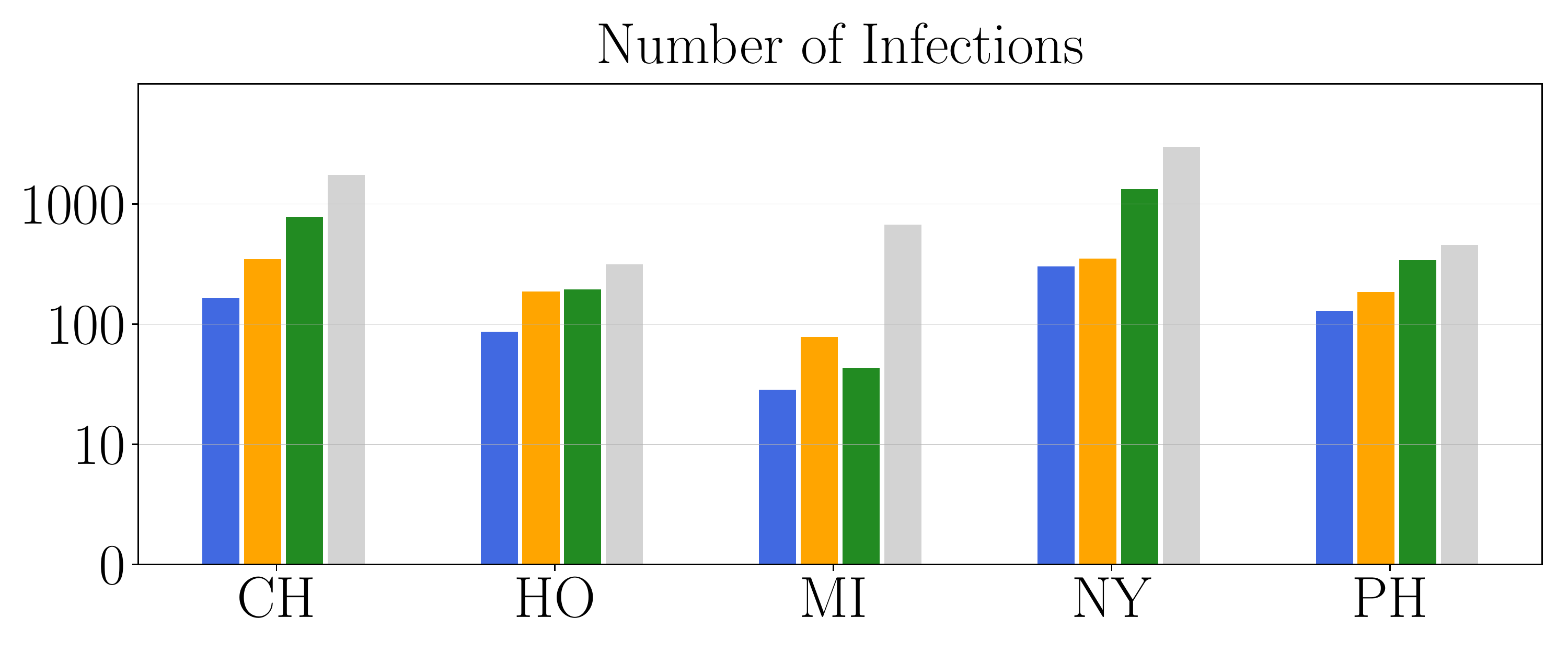}
  \end{subfigure}
  \begin{subfigure}[t]{0.32\textwidth}
    \centering
    \includegraphics[width=0.9\textwidth]{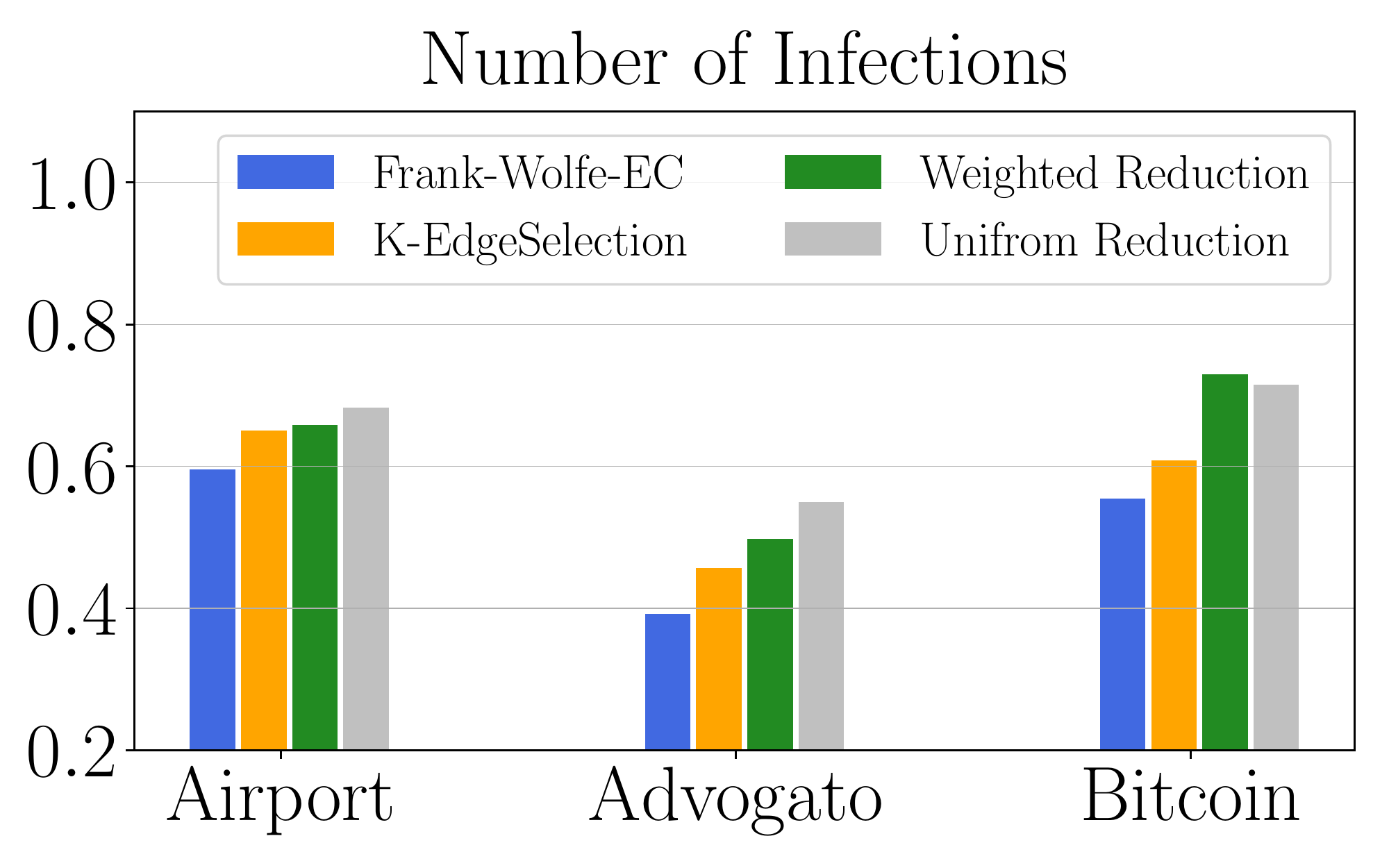}\label{fig_web_count}
  \end{subfigure}\vfill
  \hspace{0.06\textwidth}
  \begin{subfigure}[t]{0.48\textwidth}
    \centering
    \includegraphics[width=0.9\textwidth]{./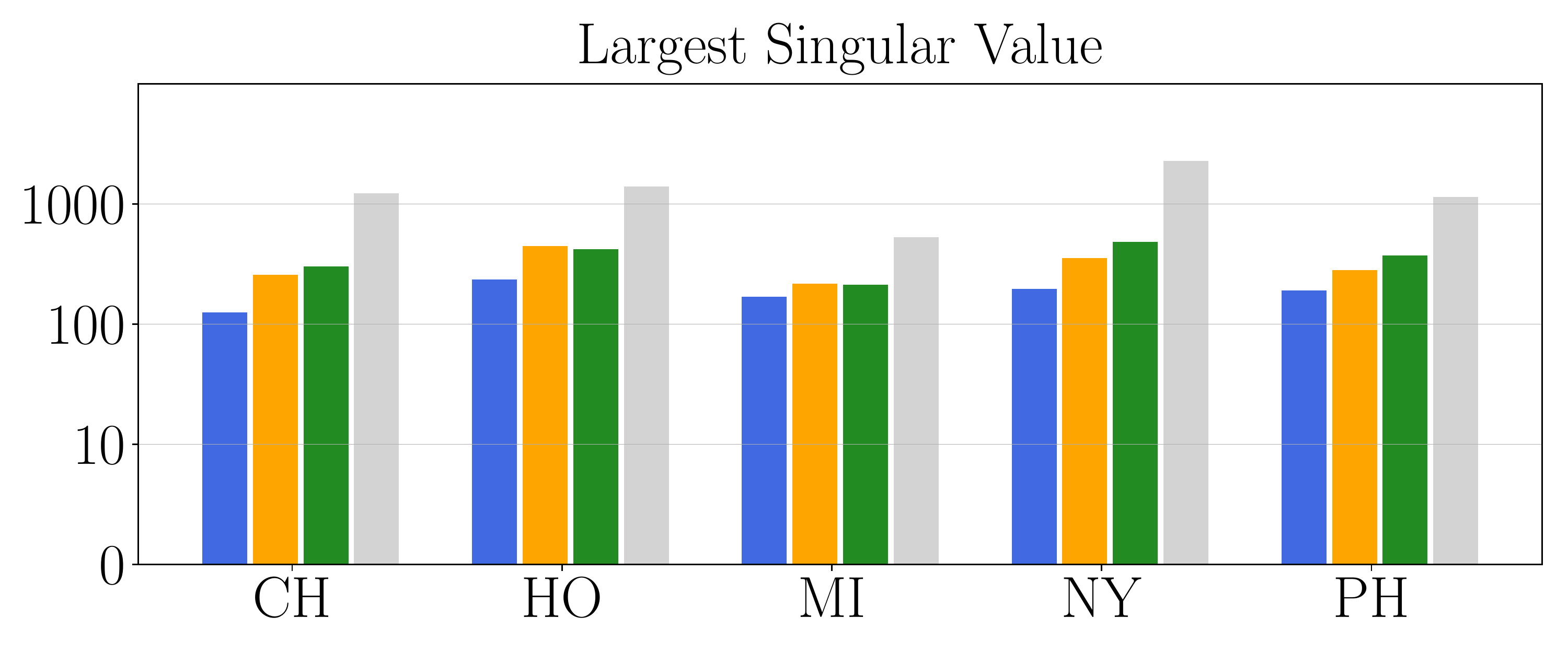}
  \end{subfigure}
  \begin{subfigure}[t]{0.32\textwidth}
    \centering
    \includegraphics[width=0.9\textwidth]{./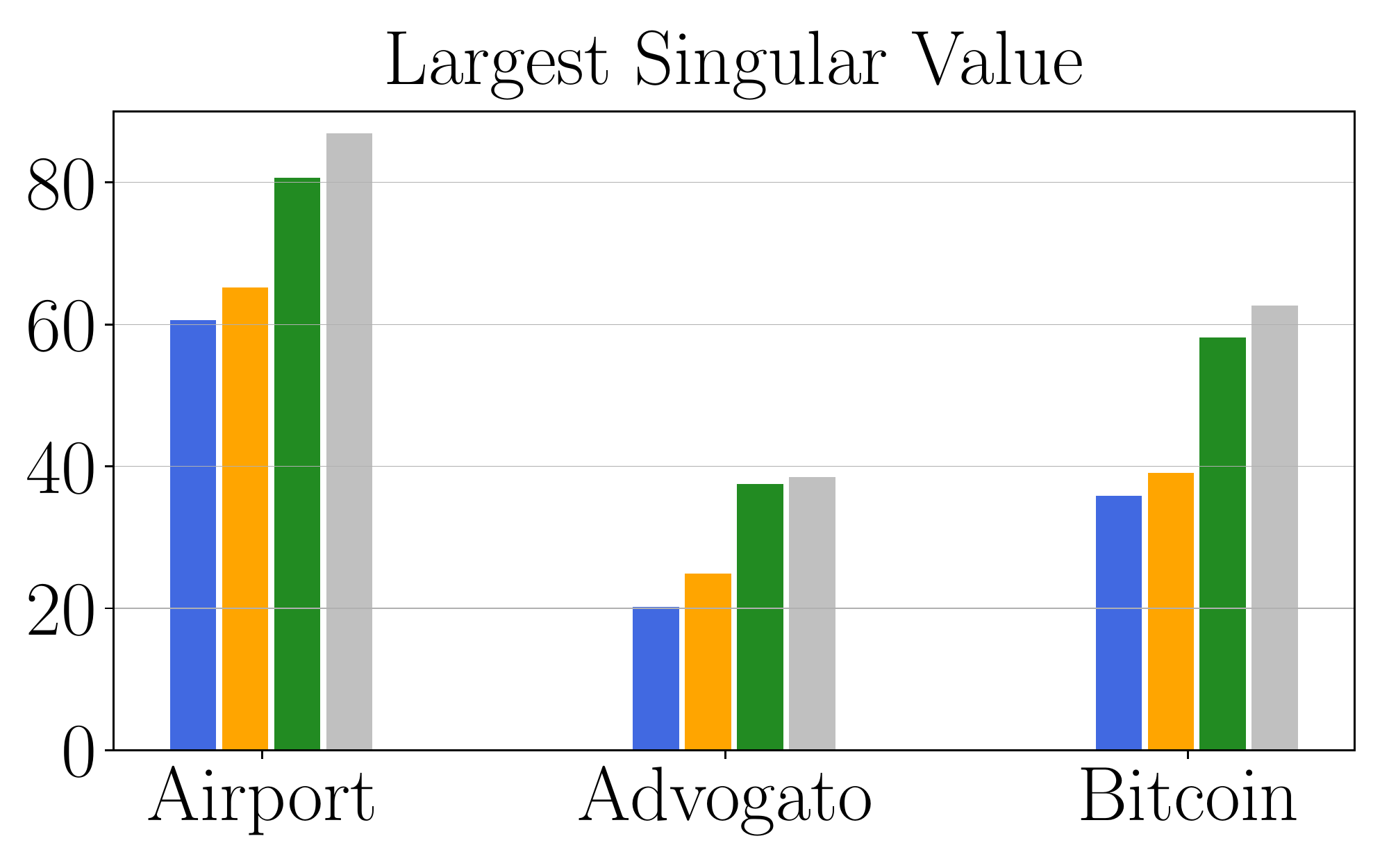}\label{fig_web_sin}
  \end{subfigure}
  \caption{Comparison of our algorithm (namely, Frank-Wolfe-EC) and several existing approaches, including K-EdgeSelection, Weighted Reduction, and Uniform Reduction (See Section \ref{sec_exp_setup} for a description of these approaches). 
  On the top panel, we report the number of infections (whose scale should be multiplied by $10^3$), averaging over fifty simulations.
  We observe that our approach can be used to reduce the number of infections and the largest singular value of the weight matrix of the diffusion process.
  At a high level, our approach works by:
  (i) Connecting the edge centrality score with the gradient of the sum of the top singular values; See Lemma \ref{prop_grad}.
  (ii) Showing that each iteration of the Frank-Wolfe algorithm can be solved efficiently with a greedy selection procedure; See Lemma \ref{thm_optimal_descent}.
  Moreover, we show that this approach applies to both static and time-varying graphs.
  }\label{fig_intro}
\end{figure*}

Suppose there is a spreading process, such as an epidemic propagating through a graph. Denote the graph as $G = (V, E)$. How would we reduce the number of affected nodes from $V$ during the spreading process? Many studies have considered this question in the network immunization literature \cite{chen2015node,chen2016eigen}, motivated by considerations for controlling the outcome of the diffusion process \cite{pastor2015review}.
A principal approach from the existing literature is to optimize spectral properties of $G$ with edge removal procedures.
For example, \citet{tong2012gelling} design algorithms to reduce the largest eigenvalue of $G$'s adjacency matrix by removing a budgeted number of edges.
\citet{le2015met} further study how to reduce the largest $r$ eigenvalues under a budget constraint of edge removals.
In this work, we revisit the spectral optimization approach on weighted graphs.
Let $W$ denote a non-negative weight matrix corresponding to the edge weights of $G$.
We consider edge-weight reduction with a budgeted amount of $B$ that will create the most drop in the largest $r$ eigenvalues of $WW^{\top}$.

For example, weighted graphs have appeared in recent studies about the pandemic.
\citet{chang2021mobility} study the counterfactual outcome of implementing edge-weight reduction strategies in mobility networks.
Reducing edge weights in mobility networks corresponds to restricting the mobility of population groups.

An effective algorithm for reducing the top singular values  of a graph is by removing edges with the highest centrality scores \cite{tong2012gelling}.
Let $\lambda_1(W)$ denote the largest singular value of $W$ (notice that the largest eigenvalue of $WW^{\top}$ is equal to the square of $\lambda_1(W)$).
Let $\vec u_1$ and $\vec v_1$ denote the left and right singular vectors corresponding to $\lambda_1(W)$, respectively.
The edge centrality score of an edge $(i, j)$ is equal to $\vec u_1(i)\cdot\vec v_1(j)$, where $\vec u_1(i)$ is the $i$-th entry of $\vec u_1$ and $\vec v_1(j)$ is the $j$-th entry of $\vec v_1$.
\citet{tong2012gelling} show that removing edges with the highest edge centrality scores effectively reduces $\lambda_1(W)$.
\citet{chen2018network} further quantifies the approximation ratio of this greedy algorithm using submodular optimization techniques (see also \citet{saha2015approximation}).
These works focus on the case of unweighted graphs, for which the spectral optimization problem given a budgeted amount of edge removals is NP-hard \cite{chen2016eigen}.
Notice that in the case of weighted graphs, the weight of an edge can be reduced by a fraction. 
\citet{yu2021potion} apply gradient-based optimization for targeted diffusion, which also applies to weighted graphs, with a stopping criterion until the gradient gets close to zero.

To motivate our approach, we begin by observing that the edge centrality score from the work of \citet{tong2012gelling} is equal to the gradient of the largest singular value of $W$ squared, up to a scaling of $2\lambda_1(W)$ (See Lemma \ref{prop_grad} for the full statement):
{\[ \frac{\partial\Big(\big(\lambda_1(W)\big)^2\Big)}{\partial W_{i, j}} = 2\lambda_1(W) \cdot \vec u_1(i) \cdot \vec v_1(j). \]}%
Notice that the above corresponds to the rank-$1$ SVD of $W$.
More generally, for any rank $r$, the gradient of the largest $r$ singular values can be efficiently computed via a rank-$r$ SVD of $W$.
Based on the connection between edge centrality and gradients, we minimize the largest $r$ eigenvalues of $WW^{\top}$ via the Frank-Wolfe algorithm, which involves direction finding and line search.
We show an efficient way to find the descent direction by reducing edges with the highest generalized edge centrality score (see Lemma \ref{thm_optimal_descent}).
We then recompute the eigenscores at each iteration, which is also related to the approach of \citet{le2015met}.
By comparison, our algorithm adapts to weighted graphs and is guaranteed to converge to the global optimum (see Theorem \ref{prop_continuous}).

With the connection between edge centrality and gradients, we extend our algorithm to time-varying networks, which include a sequence of graphs with evolving weight matrices.
We provide the generalized eigenscore for each edge of every graph in the sequence and design an algorithm for optimizing the largest $r$ eigenvalues of the product of all weight matrices in the sequence (cf. \citet[Sec. 4.2]{prakash2010virus}).

We evaluate our algorithms by simulating an epidemic model on eleven weighted graphs.
In the static case, our approach achieves, on average, $\mathbf{25.5\%}$ improvement over baselines during SEIR model simulations (cf. Section \ref{sec_prelim} for descriptions).
The largest singular value decreases by an average of $\mathbf{25.1\%}$ more than the baselines.
See Figure \ref{fig_intro} for an illustration.
Meanwhile, our approach is also effective for SIR and SIS models (see Appendix \ref{sec_epi_models}, where we describe both models).
Further, our algorithm reduces the number of infections by over $\mathbf{6.9\%}$ for several time-varying networks.

\smallskip
\noindent\textbf{Organization.} The rest of our paper is organized as follows.
In Section \ref{sec_prelim}, we formally define the spectral optimization problem on weighted graphs.
Then in Section \ref{sec_alg}, we develop two algorithms for this problem on static and time-varying networks.
We validate our approach with extensive experiments in Section \ref{sec_exp}.
Lastly, we discuss several related pieces of literature in Section \ref{sec_related} and questions for future work in Section \ref{sec_discuss}.

\section{Preliminaries}\label{sec_prelim}

\noindent\textbf{Problem setup.}
Given a spreading process on a network, we are interested in designing algorithms to reduce the number of affected nodes.
Let $\cG = (\cV, \cE)$ be a weighted and possibly directed graph.
Let $\cV$ be the set of vertices and $\cE$ be the set of edges.
We use $W$ to denote a non-negative weight matrix over the edges, with $W_{i,j}$ being its $(i,j)$-th entry.
Given an \emph{arbitrary} edge-weight reduction budget $B$,
how should we allocate the budget across the edges?

To answer this question, we consider an eigenvalue optimization approach that has been the basis of prior works for unweighted graphs \cite{chakrabarti2008epidemic,prakash2012threshold,tong2012gelling,chen2015node}.
The idea behind eigenvalue optimization approaches is to modify the weight matrix $W$ so that its largest eigenvalue is most reduced.
We extend the eigenvalue minimization approach to weighted networks as follows.
Let $M$ be an $n$ by $n$ matrix, where $n$ is the number of nodes in $\cV$.
Given a rank $r$, let $\lambda_k(M)$ be the $k$-th largest singular value of $M$.
We consider the following problem:
{\small
\begin{align}
     \min_{M}~~&\quad f(M) = \sum_{k=1}^r \big(\lambda_k(M)\big)^2 \label{eq_convex} \\
    \mbox{s.t.}              
                                ~~&\,\, \sum_{(i,j)\in \cE} \big(W_{i, j} - M_{i,j}\big) \le B \nonumber \\
                                \quad&\quad 0 \le M_{i, j} \le W_{i, j},\, \forall\, (i, j)\in \cE, \nonumber \\
                                \quad~~&\quad M_{i,j}=0, \quad\quad\quad\,\,\, \forall\, (i, j)\notin \cE. \nonumber
\end{align}}%
After solving the above problem, we get a reduced weight matrix $M$ as the solution of our intervention strategy.
Notice that we approach this problem from an optimization perspective. Questions including interpreting the solution would be interesting questions for future work.
As a remark, the square of $\lambda_k(M)$ equals the $k$-th largest eigenvalue of $MM^{\top}$.
Thus, the objective in equation \eqref{eq_convex}  includes the top-$r$ eigenvalues (see also \citet{le2015met}).
The reason is that the other top eigenvalues could still affect the spreading process in subgraphs of $G$ \cite{andersen2006local,gleich2012vertex,le2015met,yu2021potion}. %
In Figure \ref{fig:scale_singular_values}, we first illustrate that reducing the largest singular value of $G$ reduces the number of infections during simulated spreading processes.
In Section \ref{sec:scalability}, we further demonstrate that having the freedom to choose the rank $r$ helps reduce the number of infections.

\begin{figure}[!h]
    \centering
    \includegraphics[width=0.3234\textwidth]{./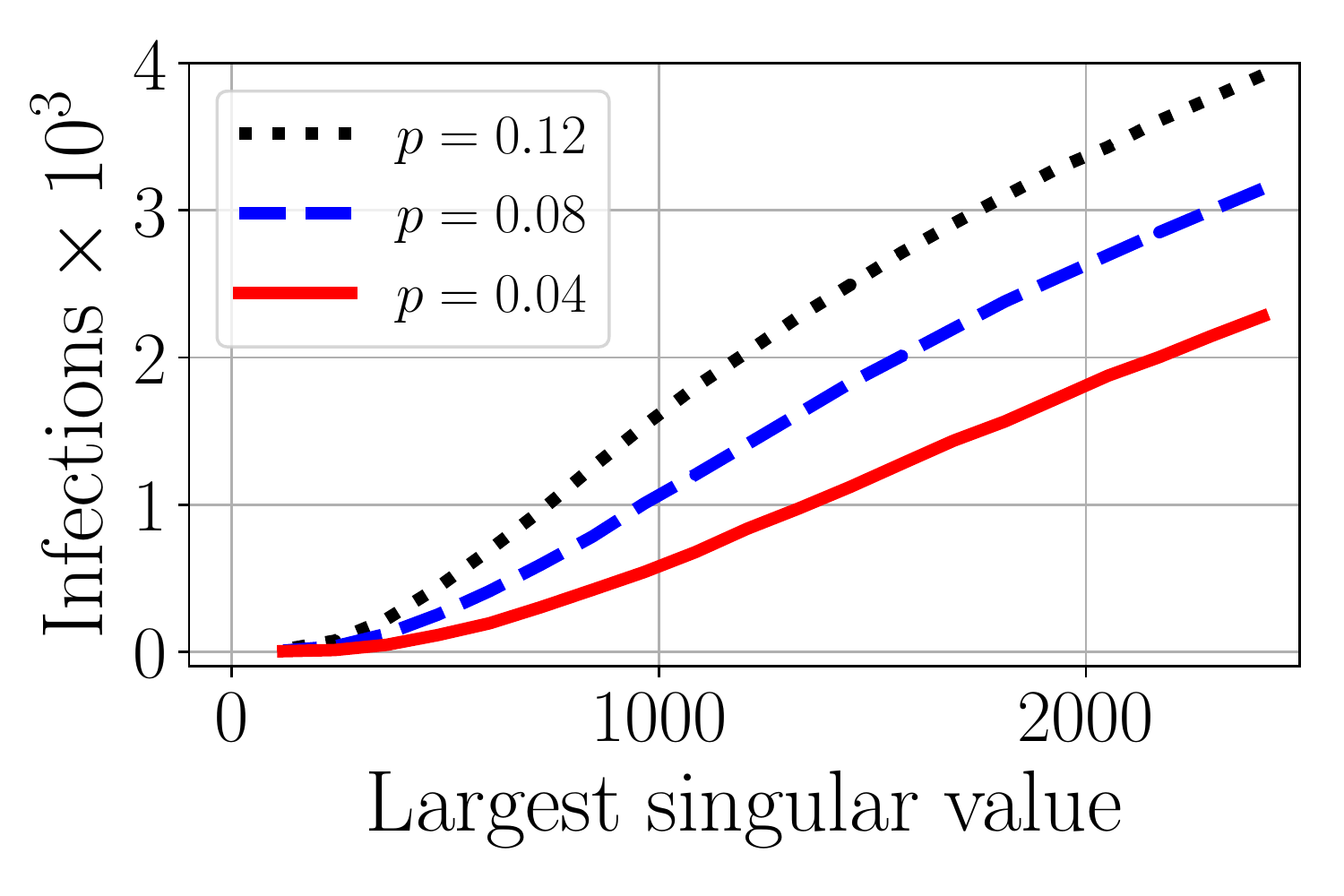}%
  \caption{The number of infections strongly correlates with the largest singular value of the graph: more infections are observed for higher values of $\lambda_1(W)$ (by rescaling $W$). The spreading rate is denoted as $p$.}\label{fig:scale_singular_values}
\end{figure}

\noindent\textbf{Example.}
To give an example of weighted graphs in epidemic spreading, we can consider mobility networks, which describe the movements from groups of individuals to locations.
The graph is weighted by the number of movement records. %
For instance, \citet{chang2021mobility} introduces a mobility-based modeling approach to fit the observed number of infections.
Their approach involves fitting a metapopulation SEIR model with publicly available mobility records.
Recall that an SEIR model uses four compartments to capture a spreading process: Susceptible (S), Exposed (E), Infected (I), and Recovered (R).
In their case, the mobility network is bipartite: one side being points of interest (POIs) and the other being census block groups (CBGs).
One way to convert the weighted bipartite network to our problem setup is by joining the traffic across all POIs for every pair of CBGs via matrix multiplication.

\vspace{0.05in}
\smallskip
\noindent\textbf{Frank-Wolfe algorithm.}
The Frank–Wolfe algorithm is an iterative first-order optimization algorithm for constrained convex optimization (see, e.g., \citet{nocedal2006numerical}).
There are two major steps in the design of this algorithm
First, there is a direction-finding subproblem that computes the descent direction with the smallest correlation with the gradient of the objective.
Second, based on this descent direction, the step size is determined (e.g., by a line search).
Lastly, a gradient descent update is performed using the determined step size and the descent direction.

\section{Spectral Optimization with Frank-Wolfe}\label{sec_alg}

We present a new algorithm to optimize problem \eqref{eq_convex}.
We observe that the gradient of $f(M)$ is equal to the edge centrality scores.
Then, we develop an iterative algorithm with an efficient inner loop that reduces edges with the highest edge centrality.
Lastly, we extend our algorithm to time-varying networks. %

\begin{figure*}[!ht]
	\begin{subfigure}[b]{0.33\textwidth}
		\centering
		\includegraphics[width=0.98\textwidth]{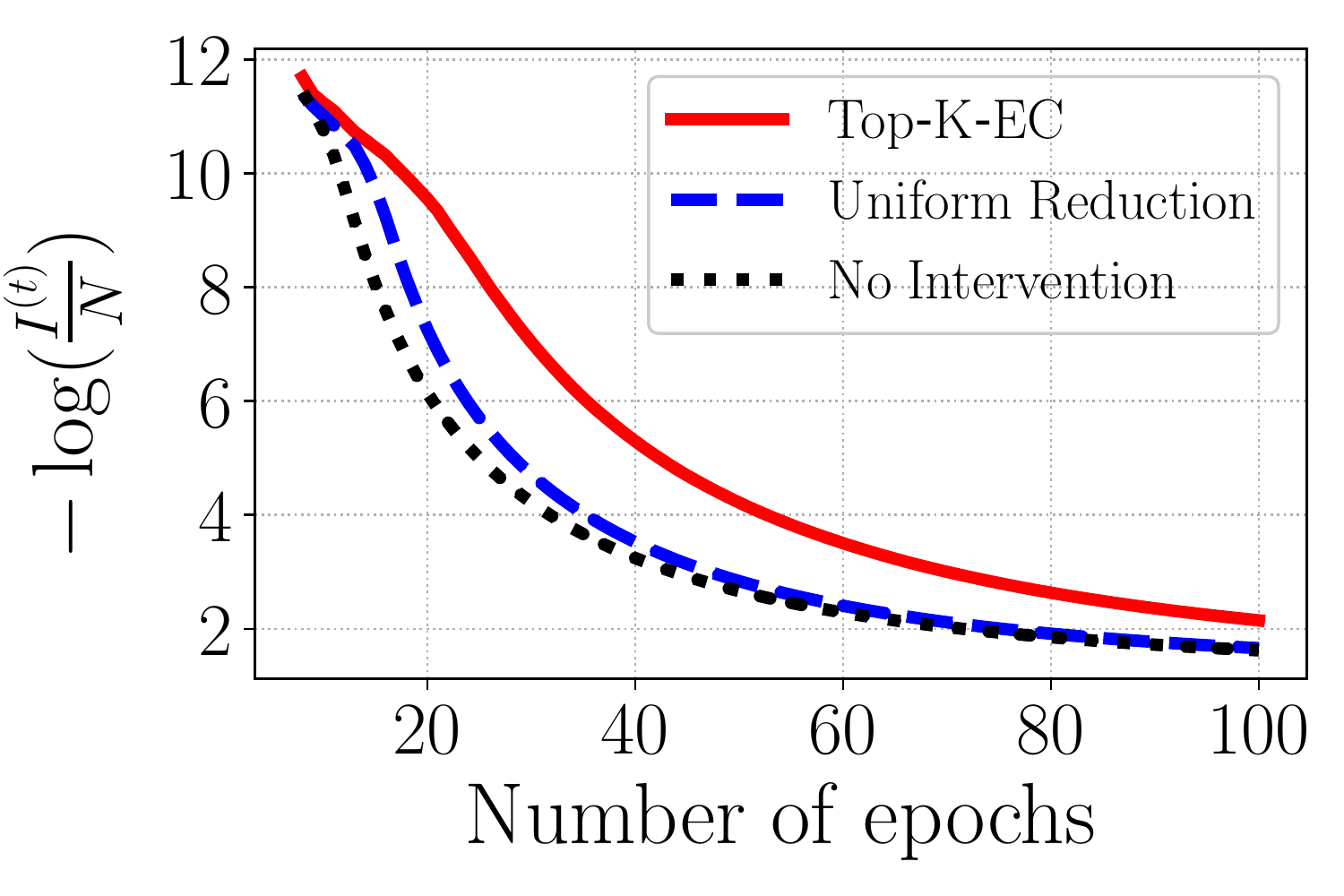}
		\caption{\footnotesize Budget is 1\% of total edge weights.}
		\label{fig_static_comparison1}
	\end{subfigure}\hfill%
	\begin{subfigure}[b]{0.33\textwidth}
		\centering
		\includegraphics[width=0.98\textwidth]{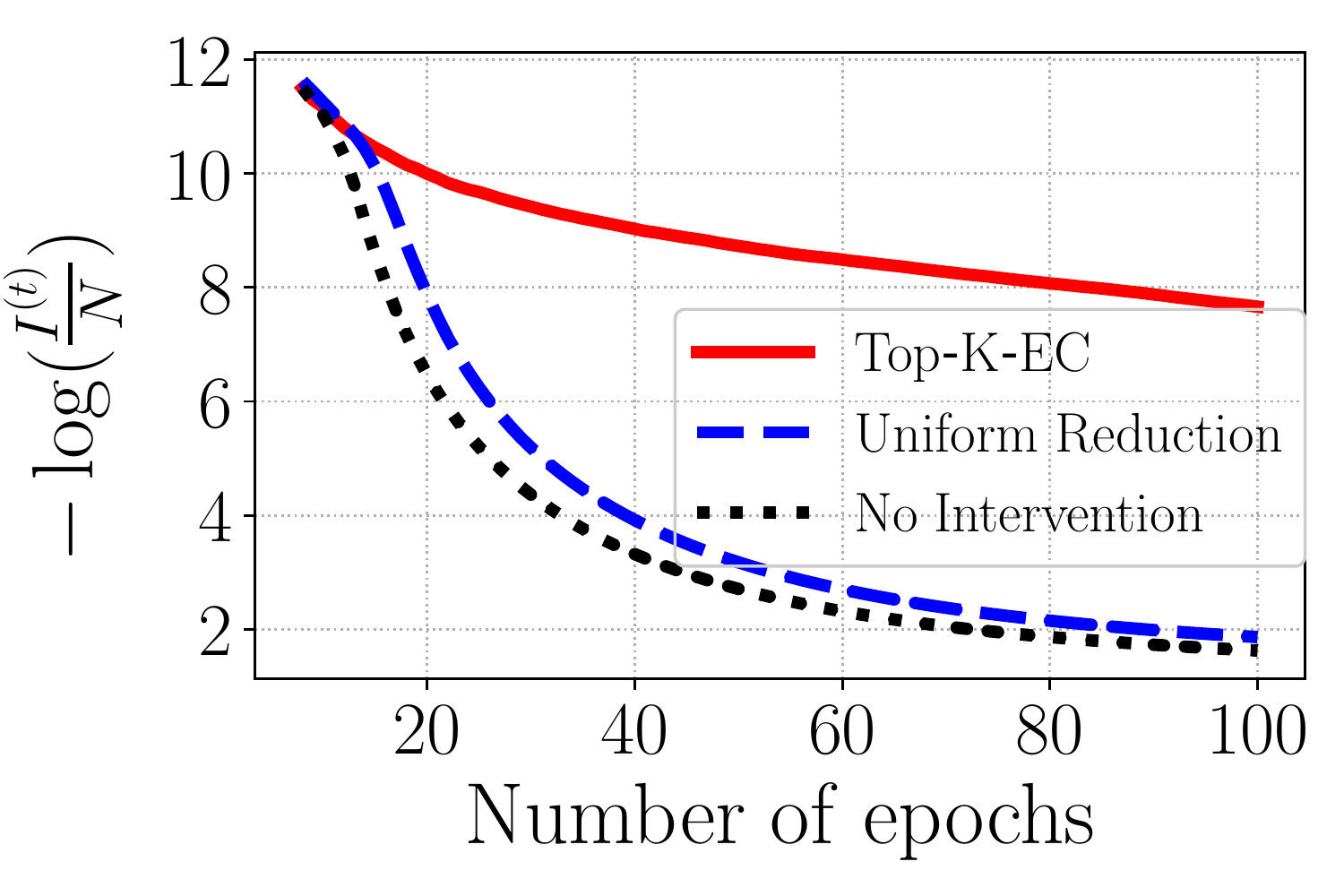}
		\caption{\footnotesize Budget is 20\% of total edge weights.}
		\label{fig_static_comparison20}
	\end{subfigure}\hfill%
	\begin{subfigure}[b]{0.33\textwidth}
		\centering
		\includegraphics[width=0.98\textwidth]{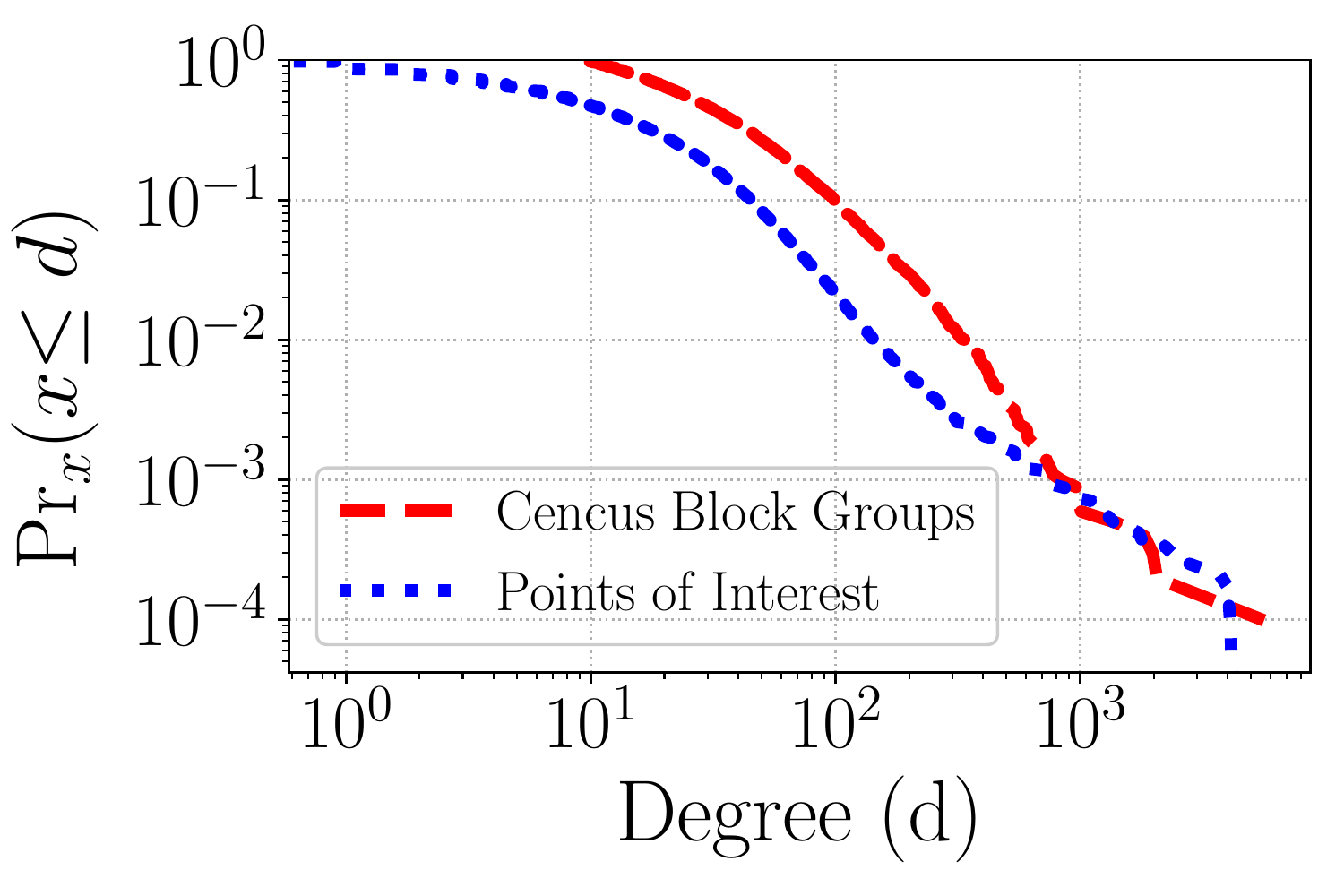}
		\caption{\footnotesize CCDF of nodes.}
		\label{fig_poi_deg}
	\end{subfigure}
	\caption{Comparison of greedy selection and uniform edge-weight reduction on a mobility network.
	Top-K-EC is more effective in reducing the infected proportion throughout the SEIR model simulation.
	Moreover, the groups and the points of interest in the graph follow heavy-tailed degree distributions, supporting our selection using edge centrality scores.}
	\label{fig_sec31}
\end{figure*}

\vspace{-0.04in}
\subsection{Edge centrality as gradient}\label{sec_edge_centrality}

To motivate our approach, we begin by reviewing the approach of \citet{tong2012gelling}, which introduces edge centrality to reduce $f(W)$ for the case of $r = 1$.
The edge centrality score is defined as the product of the eigenvector scores from both ends of an edge.
Let $X$ be any matrix. Let $\vec u_1$ and $\vec v_1$ be the left and right singular vector of $X$, corresponding to $\lambda_1(X)$.
Then, for any edge $(i, j) \in \cE$, its edge centrality score is given by $\vec u_1(i) \cdot \vec v_1(j)$, where $\vec u_1(i)$ denotes the $i$-th coordinate of $\vec u_1$ and $\vec v_1(j)$ denotes the $j$-th coordinate of $\vec v_1$.

The edge-weight reduction can be viewed as a continuous relaxation of edge removal since the weight of an edge can be reduced by a fraction.
Interestingly, we show that the edge centrality scores are equal to the gradient of $\lambda_1(XX^{\top})$ concerning the edge weights up to scaling.
As a result, we generalize edge centrality scores as the gradient of the largest $r$ singular values of $X$.

\begin{lemma}\label{prop_grad}%
    Assume that the singular values of $X$ are all distinct.
    Then, for any $1\le i, j \le n$, the partial derivative of  $(\lambda_1(X))^2$ with respect to $X_{i,j}$ satisfies
    {\small \begin{align}
        \frac{\partial \big((\lambda_1(X))^2\big)}{\partial X_{i, j}} = 2 \lambda_1(X) \cdot \vec u_1(i) \cdot \vec v_1(j). \label{eq_edge_cen_1}
    \end{align}}%
    More generally, for any $r = 1,2,\dots, n$, we have
    {\small
    \begin{align}
        \frac{\partial \big(\sum_{k=1}^r (\lambda_k(X))^2 \big)}{\partial X_{i,j}} = 2 \sum_{k=1}^r \lambda_k(X) \cdot \vec u_k(i) \cdot \vec v_k(j). \label{eq_edge_cen_r}
    \end{align}}
\end{lemma}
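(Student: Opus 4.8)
The plan is to invoke first-order perturbation theory for singular values. The central fact is that when the singular values of $X$ are simple---which is exactly what the distinctness hypothesis guarantees---each singular value $\lambda_k(X)$ is a differentiable function of the entries of $X$, with differential
\[ d\lambda_k(X) = \vec u_k^\top (dX)\, \vec v_k, \]
where $dX$ denotes an arbitrary perturbation of the matrix. Once this identity is in hand, the rest is the chain rule: since $d\big((\lambda_k(X))^2\big) = 2\lambda_k(X)\, d\lambda_k(X)$, reading off the coefficient of the entry $X_{i,j}$ in the expansion $\vec u_k^\top (dX)\, \vec v_k = \sum_{p,q} \vec u_k(p)\,(dX)_{p,q}\,\vec v_k(q)$ yields $\partial (\lambda_k(X))^2 / \partial X_{i,j} = 2\lambda_k(X)\,\vec u_k(i)\,\vec v_k(j)$. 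Summing over $k = 1,\dots,r$ gives \eqref{eq_edge_cen_r}, and the case $r=1$ recovers \eqref{eq_edge_cen_1}.

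To derive the differential identity, I would pass through the eigendecomposition of $XX^\top$, which avoids having to differentiate the full SVD. Observe that $(\lambda_k(X))^2$ is an eigenvalue of the symmetric matrix $XX^\top$ with unit eigenvector $\vec u_k$, so symmetric eigenvalue perturbation gives $d\big((\lambda_k(X))^2\big) = \vec u_k^\top\, d(XX^\top)\, \vec u_k$ for a simple eigenvalue. Expanding $d(XX^\top) = (dX)X^\top + X(dX)^\top$ and using the SVD relations $X^\top \vec u_k = \lambda_k(X)\,\vec v_k$ and $\vec u_k^\top X = \lambda_k(X)\,\vec v_k^\top$, both terms collapse to $\lambda_k(X)\,\vec u_k^\top (dX)\,\vec v_k$ (the second after transposing a scalar). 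Adding them produces $d\big((\lambda_k(X))^2\big) = 2\lambda_k(X)\,\vec u_k^\top (dX)\,\vec v_k$, which is precisely the chain-rule form above and lets me skip the intermediate $d\lambda_k$ identity entirely.

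The step needing the most care is justifying the perturbation formula, which is where the distinctness hypothesis is genuinely used. I would first verify that distinct singular values of $X$ imply simple eigenvalues of $XX^\top$: because singular values are nonnegative, distinctness of $\{\lambda_k(X)\}$ forces distinctness of $\{(\lambda_k(X))^2\}$, so every eigenvalue of $XX^\top$ is simple and the classical result that a simple eigenvalue of a symmetric matrix is real-analytic in the matrix entries applies. The underlying formula $d\mu = w^\top (dA)\,w$ for a simple eigenvalue $\mu$ with unit eigenvector $w$ then follows by differentiating $Aw = \mu w$, left-multiplying by $w^\top$, and using $w^\top w = 1$ together with the symmetry relation $w^\top A = \mu\, w^\top$ to cancel the eigenvector-derivative terms. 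Every remaining manipulation is a linear-algebraic identity from the SVD, so the only real obstacle is this differentiability justification rather than any lengthy computation.
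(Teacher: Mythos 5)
Your proof is correct, and it reaches the same perturbation identity $d\lambda_k = \vec u_k^{\top}(dX)\,\vec v_k$ that drives the paper's argument, but by a genuinely different route. The paper differentiates the SVD relation $\vec u_k^{\top} X = \lambda_k \vec v_k^{\top}$ directly, right-multiplies by $\vec v_k$, and kills the singular-vector-derivative terms using the unit-norm constraints $\der(\vec v_k^{\top})\vec v_k = 0$ and $\der(\vec u_k^{\top})\vec u_k = 0$; it then applies the chain rule to pass to $(\lambda_k)^2$. You instead route the computation through the symmetric eigenproblem $XX^{\top}\vec u_k = (\lambda_k)^2 \vec u_k$, invoke the standard formula $d\mu = w^{\top}(dA)\,w$ for a simple eigenvalue of a symmetric matrix, and expand $d(XX^{\top}) = (dX)X^{\top} + X(dX)^{\top}$ with the SVD relations to collapse both terms to $\lambda_k\,\vec u_k^{\top}(dX)\,\vec v_k$. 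Your version buys two things the paper leaves implicit: (i) an explicit justification of differentiability, since simplicity of the eigenvalues of the symmetric matrix $XX^{\top}$ (which you correctly deduce from distinctness and nonnegativity of the singular values) gives real-analytic dependence by classical results, whereas the paper freely differentiates $\vec u_k$, $\vec v_k$, $\lambda_k$ without comment; and (ii) you differentiate $(\lambda_k)^2$ directly rather than $\lambda_k$, which sidesteps the non-differentiability of $\lambda_k$ itself at $\lambda_k = 0$. The paper's derivation is shorter and stays entirely within the SVD, at the cost of these unstated regularity assumptions. Both are valid; no gap in yours.
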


Above, $\vec u_k$ and $\vec v_k$ are the left and right singular vectors of $X$ corresponding to $\lambda_k(X)$, and the indices correspond to entries of the vectors.
The proof of Lemma \ref{prop_grad} is presented in Appendix \ref{sec_proof}.
Given a weight matrix $W$ of a network, we compute the edge centrality scores via the best rank-$r$ approximation of $W$ as $\tilde W_r$. %
Let $\big(\tilde W_r\big)_{i, j}$ be the edge centrality score of edge $(i, j) \in \cE$.
We validate that removing edges via top edge centrality scores effectively reduces infections.
Figure \ref{fig_sec31} shows the benefit compared with uniform reduction.

\subsection{Global optimization via iterative greedy}

We now develop the {Frank-Wolfe edge centrality} minimization algorithm, or Frank-Wolfe-EC, specified in Algorithm \ref{alg_edgecen}.
The high-level idea is iteratively applying a greedy selection of edges with the highest generalized edge centrality scores while recomputing the scores:
\begin{itemize}[leftmargin=0.15in]
\setlength\itemsep{0.0em}
\item \textbf{Input:} The primary inputs are graph $\cG$ with weight matrix $W$, an arbitrary budgeted reduction amount $B$, and an arbitrary rank $r \le n$.
\item \textbf{Output:} An $n$ by $n$ weight matrix $M$ with reduced edge weights from $W$.
\end{itemize}

\noindent\textbf{Derivation of the algorithm:} At every iteration $t$ from $1$ to $T$, let $M_t$ be the currently modified weight matrix.
Let $\nabla f(M_t)$ be the gradient of $f(M_t)$.
The Frank-Wolfe algorithm \cite{frank1956algorithm,nocedal2006numerical} computes a descent direction of $M_t$ by minimizing the correlation between the gradient and the iterate subject to the same constraints as problem \eqref{eq_convex}:
{\small
\begin{align}
    G_t^{\star} \leftarrow \arg\min_{X}~~&\quad \inner{X}{\nabla f(M_t)} = 
    \bigtr{\nabla f(M_t)^{\top} X} \label{eq_g_t} \\
    \mbox{s.t.}                 \quad& \sum_{(i,j)\in \cE} \Big(W_{i, j} - X_{i,j}\Big) \le B  \nonumber \\
                                ~~&\quad 0 \le X_{i, j} \le W_{i, j},\,  \forall (i, j)\in \cE, \nonumber \\
                                ~~&\quad X_{i,j}=0,\quad\quad\quad\,\,\, \forall (i, j)\notin \cE. \nonumber
\end{align}}%

The core of our approach is to prove that the optimal descent direction for problem \eqref{eq_g_t} is essentially by removing edges via top edge centrality scores.
Let $X$ be the best rank-$r$ approximation of $M_t$.
Let ${(i_1, j_1), (i_2, j_2), \dots, (i_{m}, j_{m})}$ be the edges in descending order of their generalized edge centrality scores, where $m$ is the number of edges in the graph.
Consider the first $k$ edges whose total weight exceeds the reduction budget $B$.
Then, the weight of the first $k-1$ edges is reduced to zero.
The weight of the last edge decreases with the remaining budget.

Let us call this procedure Top-K-EdgeCentrality (cf. Alg. \ref{alg_edgecen}).
The following result proves that this greedy procedure yields an optimal solution to problem \eqref{eq_g_t}!

\begin{lemma}\label{thm_optimal_descent}
    The optimal solution  $G_t^{\star}$ (cf. \ref{eq_g_t}) is equal to the output of Top-K-EdgeCentrality($W, B; M_t$).
\end{lemma}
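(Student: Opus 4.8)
The plan is to recognize the direction-finding subproblem \eqref{eq_g_t} as a continuous (fractional) knapsack problem, prove that the greedy rule is optimal for it, and then verify that Top-K-EdgeCentrality implements exactly this greedy rule. First I would rewrite the linear objective in terms of the reduction variables. By Lemma \ref{prop_grad}, the gradient entry is $\nabla f(M_t)_{i,j} = 2\, g_{i,j}$, where $g_{i,j} = \sum_{k=1}^r \lambda_k(M_t)\,\vec u_k(i)\,\vec v_k(j)$ is the $(i,j)$ entry of the best rank-$r$ approximation of $M_t$, i.e., the generalized edge centrality score. Substituting $X_{i,j} = W_{i,j} - d_{i,j}$, where $d_{i,j} \ge 0$ is the weight removed from edge $(i,j)$, gives
\[
\langle X, \nabla f(M_t)\rangle = 2\sum_{(i,j)\in\cE} W_{i,j}\,g_{i,j} \;-\; 2\sum_{(i,j)\in\cE} d_{i,j}\,g_{i,j}.
\]
The first sum is constant, so minimizing the objective is equivalent to maximizing $\sum_{(i,j)\in\cE} d_{i,j}\,g_{i,j}$ subject to $0 \le d_{i,j} \le W_{i,j}$ and $\sum_{(i,j)\in\cE} d_{i,j} \le B$.

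Next I would identify this as fractional knapsack: each edge is an item with capacity $W_{i,j}$ and value density $g_{i,j}$ per unit of budget, and the total budget is $B$. I would then establish greedy optimality by an exchange argument. Order the edges so that $g_{i_1,j_1} \ge g_{i_2,j_2} \ge \cdots$. Suppose an optimal $d^\star$ is not of the greedy form; then there exist an edge $e$ appearing before an edge $e'$ in this order (so $g_e \ge g_{e'}$, strictly if the scores differ) with $d^\star_e < W_e$ and $d^\star_{e'} > 0$. Shifting an infinitesimal amount $\varepsilon$ of budget from $e'$ to $e$ preserves every constraint and changes the objective by $\varepsilon(g_e - g_{e'}) \ge 0$, so without loss of optimality the solution fills edges to capacity in descending order of $g$ and absorbs the residual budget on a single fractional edge. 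This is precisely the assignment produced by Top-K-EdgeCentrality: the first $k-1$ edges are reduced to zero ($d = W$) and the $k$-th edge takes up the remaining budget. Under ties in the scores any tie-breaking yields the same optimal value, so $G_t^\star$ coincides with the algorithm's output.

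The main subtlety to handle is the sign of the centrality scores. Because each unit of budget spent on an edge with $g_{i,j} < 0$ \emph{decreases} the objective $\sum d_{i,j} g_{i,j}$, the optimal solution must set $d_{i,j} = 0$ on such edges; since the budget constraint is an inequality, it need not be saturated. For $r = 1$ this never occurs: as $W$ (hence every iterate $M_t$, which is obtained only by reducing weights) is non-negative, Perron--Frobenius gives $\vec u_1, \vec v_1 \ge 0$ entrywise, so $g_{i,j} \ge 0$ for all edges. For general $r$ the higher singular vectors carry mixed signs, so I would either restrict the greedy scan to edges with positive score, or argue that the stopping rule of Top-K-EdgeCentrality does not advance past the last positive-score edge whenever $B$ does not exceed their total weight --- the regime relevant to the intervention application. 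Confirming this boundary behavior, and checking that the greedy assignment and the LP optimum agree exactly at the fractional last edge, is the one place requiring care; the remainder is the standard knapsack exchange argument.
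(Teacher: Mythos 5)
Your proposal follows essentially the same route as the paper's proof: since $X_{i,j}=0$ off the edge set, the objective $\inner{X}{\nabla f(M_t)}$ is linear with each $X_{i,j}$ weighted by the generalized edge centrality score, and the constrained minimizer is the greedy (fractional-knapsack) allocation that Top-K-EdgeCentrality implements. The paper simply asserts this greedy optimality in one sentence; your substitution $X_{i,j}=W_{i,j}-d_{i,j}$ and the exchange argument supply the justification it leaves implicit, so on that front you are doing the same thing more carefully rather than something different. The one place where you go genuinely beyond the paper is the sign discussion: the paper's proof is silent on the possibility that $\big(\tilde M_r\big)_{i,j}<0$ for $r>1$ (the rank-$r$ truncation of a nonnegative matrix need not be entrywise nonnegative), and in that regime the budget inequality should be left slack on negative-score edges, whereas the pseudo-code as written keeps reducing until $B$ is exhausted. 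Your observation that the lemma as literally stated needs either the Perron--Frobenius case $r=1$, a restriction of the scan to positive-score edges, or a budget not exceeding the total weight of positive-score edges is a real caveat that the paper's proof does not address; it would strengthen the paper to state this hypothesis explicitly.
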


\begin{proof}
    By Lemma \ref{prop_grad}, for every edge $(i, j) \in \cE$, the gradient of $f(M_t)$ over this edge is given by the \textit{generalized edge centrality} scores.
    Since $X_{i, j} = 0$ for any $(i, j) \notin \cE$, the optimization objective is:
    {\small \begin{align}
       \inner{X}{\nabla f(M)} = \sum_{(i, j) \in \cE}  2 X_{i, j} \Big( \sum_{k=1}^r  \lambda_k \cdot \vec  u_k(i) \cdot \vec v_k(j) \Big). \label{eq_obj_linear}
    \end{align}}%
    Above, each variable $X_{i,j}$ is multiplied precisely by the generalized edge centrality of the edge $(i, j)$ (cf. line \eqref{alg_ec}).
    Consider minimizing the equivalent objective \eqref{eq_obj_linear} with the constraints of Problem \eqref{eq_g_t}.
    The minimizer, $G_t^{\star}$, is achieved by reducing the weight of the edges with the highest edge centrality to zero until the budget $B$ gets exhausted.
    This is precisely the procedure of Top-K-EC from lines \eqref{alg_ec}-\eqref{alg_ec_red}.
    Thus, we have proved this result.
\end{proof}

After finding the descent direction $G_t^{\star}$, the next step of the Frank-Wolfe algorithm is setting a learning rate $\eta_t$ in a range between $0$ and $1$.
This follows standard procedures from the Frank-Wolfe algorithm \cite{nocedal2006numerical}.
See Algorithm \ref{alg_edgecen} for the complete pseudo-code.

\medskip
\noindent\textbf{Running time analysis:}
Next, we examine the number of iterations needed for Alg. \ref{alg_edgecen} to converge to the global optimum of problem \eqref{eq_convex}.
A well-established result is that the Frank-Wolfe algorithm will converge to the global minimum for convex objectives under mild conditions \cite{nocedal2006numerical}.
Note that objective \eqref{eq_convex} is indeed convex.
Therefore, our algorithm will provably converge to the global minimum of problem \eqref{eq_convex}, denoted as $f^{\textup{OPT}}$.

\begin{theorem}\label{prop_continuous}
    Let $\kappa$ be the minimum of $\lambda_r({M_t}) - \lambda_{r+1}({M_{t}})$ over $t = 0, 1,\dots, T-1$.
    Assume that $\kappa$ is strictly positive.
    Then, the following holds for $M_T$:
    {\small \begin{align}
        f(M_T) - f^{\textup{OPT}} \le \frac{40\Big(\sum_{(i,j)\in\cE} W_{i,j}^2\Big)\alpha_2}{T},\label{eq_converge}
    \end{align}}%
    where $\alpha_2 = \kappa^{-1}{r}^{1/2}\big(\max_{t=1}^T \lambda_1(M_t) \big) + r + C$, for a fixed value $C > 0$.
\end{theorem}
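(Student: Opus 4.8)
The plan is to invoke the standard convergence guarantee for the Frank--Wolfe algorithm on a smooth convex objective over a compact convex domain, and then to specialize each ingredient---convexity, the diameter of the feasible region, and the smoothness (curvature) constant of $f$---to the present setting. First I would confirm that $f(M) = \sum_{k=1}^r (\lambda_k(M))^2$ is convex: since $(\lambda_k(M))^2$ is the $k$-th largest eigenvalue of $MM^{\top}$, Ky Fan's theorem gives $f(M) = \max_{U^{\top} U = I_r} \mathrm{tr}(U^{\top} M M^{\top} U)$, a pointwise maximum of convex quadratics in $M$, hence convex. The feasible set of problem \eqref{eq_convex} is a compact polytope contained in the box $\{0 \le M_{i,j} \le W_{i,j}\}$, so its squared diameter in Frobenius norm is at most $\sum_{(i,j)\in\cE} W_{i,j}^2$; this is exactly the factor appearing in front of the bound \eqref{eq_converge}.

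With these in hand, the classical Frank--Wolfe rate states $f(M_T) - f^{\textup{OPT}} \le 2 C_f / T$ (up to the learning-rate convention), where the curvature constant satisfies $C_f \le L \cdot \mathrm{diam}^2$ and $L$ is a Lipschitz constant of $\nabla f$. The remaining---and principal---task is therefore to bound $L$. By Lemma \ref{prop_grad}, $\nabla f(M) = 2 \sum_{k=1}^r \lambda_k(M)\, \vec u_k \vec v_k^{\top}$, i.e., twice the best rank-$r$ approximation of $M$. I would bound the Lipschitz constant of this map by splitting the perturbation of $\sum_{k=1}^r \lambda_k \vec u_k \vec v_k^{\top}$ into two contributions: (i) the change in the singular values $\lambda_k$, controlled by Weyl's and Mirsky's inequalities (each singular value is $1$-Lipschitz), which contributes the additive $r + C$ terms in $\alpha_2$; and (ii) the change in the rank-$r$ singular subspaces spanned by $\{\vec u_k\}$ and $\{\vec v_k\}$, controlled by the Davis--Kahan $\sin\Theta$ theorem. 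The subspace perturbation bound is inversely proportional to the spectral gap $\lambda_r(M) - \lambda_{r+1}(M)$ and is weighted by the magnitude of the retained singular values (at most $\lambda_1(M)$); taking the worst case over iterates yields the $\kappa^{-1} r^{1/2} (\max_t \lambda_1(M_t))$ term. This is precisely where the hypothesis $\kappa > 0$ is needed: it keeps the top-$r$ subspace well-separated from its complement, so that the rank-$r$ truncation is differentiable with a finite Lipschitz constant along the trajectory.

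The hard part will be step (ii): making the Davis--Kahan estimate quantitative in a way that accounts for all $r$ directions simultaneously and the weighting by $\lambda_k$, and then assembling the two contributions into the single constant $\alpha_2 = \kappa^{-1} r^{1/2} (\max_{t} \lambda_1(M_t)) + r + C$. A subtlety to address is that $\nabla f$ need not be Lipschitz globally on the domain---where the gap closes, the rank-$r$ map is discontinuous---so the smoothness estimate should be applied locally along the segments traversed by the Frank--Wolfe iterates, which is exactly why $\kappa$ is defined as the minimum gap over $t = 0, \dots, T-1$ rather than a global quantity. Finally I would collect the factor $2$ arising because $\nabla f$ is twice the rank-$r$ truncation, the factor from the curvature bound $C_f \le L \cdot \mathrm{diam}^2$, and the numerical constants inherent in the Davis--Kahan and Frank--Wolfe estimates; together these absorb into the leading constant $40$ in \eqref{eq_converge}.
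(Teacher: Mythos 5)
Your proposal follows the same overall architecture as the paper's proof: establish convexity of $f$, bound the diameter of the feasible polytope by $\bigl(\sum_{(i,j)\in\mathcal{E}} W_{i,j}^2\bigr)^{1/2}$, bound the Lipschitz constant of $\nabla f$ in terms of the spectral gap $\kappa$, and invoke the standard $O(1/T)$ Frank--Wolfe guarantee (the paper uses Lemma 7 and Theorem 1 of Jaggi (2013)). Two sub-steps, however, are carried out by genuinely different means, and both work. For convexity, you apply Ky Fan's maximum principle to $MM^{\top}$ directly, writing $f(M)=\max_{U^{\top}U=I_r}\|M^{\top}U\|_F^2$ as a pointwise maximum of convex quadratics in $M$; the paper instead first proves convexity of $g(M)=\sum_{k\le r}\lambda_k(M)$ via the variational characterization over pairs $(U,V)$ and then treats $f(M)=g(M^{\top}M)$ by expanding the square and proving the cross-term inequality $2g(M_1^{\top}M_2)\le g(M_1^{\top}M_1)+g(M_2^{\top}M_2)$. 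Your route is shorter and avoids that auxiliary lemma. For the Lipschitz bound on $\nabla f(M)=2\tilde M_r$, you propose to rebuild the perturbation estimate from Mirsky/Weyl plus Davis--Kahan, whereas the paper cites a packaged truncated-SVD perturbation theorem for the regime $\|E\|_2\le\kappa/2$ and supplies a separate crude bound $\|M_r-\tilde M_r\|_F^2\le 2r(3\lambda_1^2+2\|E\|_2^2)$ when $\|E\|_2\ge\kappa/2$. That two-regime split is precisely how the paper resolves the global-versus-local Lipschitz issue you correctly flag: Davis--Kahan alone degenerates once the perturbation exceeds the gap, so if you pursue your plan you must either add an analogous large-perturbation case or make rigorous the restriction to the segments actually traversed (note that $\kappa$ is only assumed positive at the iterates $M_t$, not along the whole segment to $G_t^{\star}$, so the segment-based variant needs care). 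With either implementation the constants assemble into $\alpha_2$ as stated.
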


The convergence rate of $O(T^{-1})$ in statement \eqref{eq_converge} is obtained following recent literature (e.g., \cite{jaggi2013revisiting}).
This result guarantees that our algorithm will converge to the global minimum solution under mild conditions.
See Appendix \ref{sec_proof} for the proof.
The constants inherited from the previous guarantee in statement \eqref{eq_converge} can be quite large. However, in our experiments, we observe that less than $30$ iterations are sufficient for the algorithm to converge (at the global optimum).

\medskip
To recap, the running time of our algorithm is $T$ times the running time of each iteration, including:
\begin{itemize}[leftmargin=0.15in]
\setlength\itemsep{0.0em}
    \item Computing a truncated rank-$r$ SVD of a sparse matrix with $m$ nonzeros;
    this requires a time complexity of $O(m r \log (m))$  \cite{musco2015randomized}.
    \item Sorting an array of size $m$; this requires $O(m\log(m))$ time complexity.
\end{itemize}
By comparison, running a linear program solver for problem \eqref{eq_g_t} requires at least $O(m n)$ time complexity \cite{nocedal2006numerical}.
Thus, our approach is most efficient for small $r$.

\subsection{Optimization on time-varying networks}\label{sec_tv}
Our study has focused on mitigating the spread in a static network.
Another consideration is that the network topology evolves over time.
Therefore, an important question is how to tackle such temporal evolution.
Next, we show how to extend our optimization algorithm to time-varying networks.

\medskip
\noindent\textbf{Derivation of the algorithm:} 
Let the weight matrices of a sequence of graphs be denoted as $\cW = \set{W^{(1)}, W^{(2)}, \dots, W^{(s)}}$.
Motivated by the work of \citet{prakash2010virus} which shows the epidemic threshold of time-varying networks, we extend the eigenvalue minimization problem on time-varying networks.
Let $\cM = \set{M^{(1)}, M^{(2)}, \dots, M^{(s)}}$ be a sequence of modified weight matrices.
We aim to find $\cM$ that shrinks the largest eigenvalues of a product matrix:
{\small
\begin{align}
     \min_{\cM}~~&\quad f\big(\cM\big) = \sum_{k=1}^r \Big(\lambda_k\Big(\prod_{t=1}^s M^{(t)}\Big)\Big)^2 \label{eq_convex_temporal} \\
    \mbox{s.t.}              
    ~~&\,\, \sum_{t=1}^s \sum_{(i,j)\in \cE^{(t)}} \big(W_{i, j}^{(t)} - M_{i,j}^{(t)}\big) \le B \nonumber \\
    \quad&\quad 0 \le M_{i, j}^{(t)} \le W_{i, j}^{(t)},\, \forall (i, j)\in \cE^{(t)}, t = 1, \ldots, s,  \nonumber \\
    \quad~~&\quad M_{i,j}^{(t)}=0, \quad\quad\quad\,\,\, \forall (i, j)\notin \cE^{(t)}, t = 1, \ldots, s. \nonumber
\end{align}}%
Above, $\cE^{(t)}$ represents the set of edges in the $t$-th graph of the sequence.
Based on \citet[Theorem 2]{prakash2010virus}, the weight matrix that determines the epidemic threshold process in time-varying networks is the joint product of each weight matrix in the sequence:
$X = \prod_{t=1}^s M^{(t)}$. 
This is why we minimize the largest eigenvalues of the product matrix in $f(\cM)$.

Following Lemma \ref{prop_grad}, we derive the gradient of the largest $r$ eigenvalues of $X^{\top}X$ with respect to $M^{(t)}_{i, j}$, for any $1\le i,j\le n$.
By the chain rule, we have:%
{\small\begin{align}
     \frac{\partial f(\cM)}{\partial M_{i,j}^{(t)}} 
    = \Big\langle{\frac{\partial\Big(\sum_{k=1}^r \big(\lambda_k(X)\big)^2 \Big)}{\partial X}}, {\frac{\partial X}{\partial M_{i,j}^{(t)}}}\Big\rangle. \label{eq_tv}
\end{align}}%
Notice that the first term above on the right is precisely the edge centrality scores we have derived in Lemma \ref{prop_grad}. %
The second term is the product of the rest of the weight matrices in $\cW$ except that $M^{(t)}$ is replaced by an indicator matrix, which is the derivative of $M^{(t)}$ with respect to its $(i, j)$-th entry.
Let $\tilde X_r = U_r D_r V_r^{\top}$ be the rank-$r$ SVD of $X$.
We get (cf. Appendix \ref{sec_proof}):
{\small\begin{align}
    \frac{\partial f(\cM)}{\partial M^{(t)}}
    =2\Big(\prod\nolimits_{k=1}^{t-1} M^{(k)}\Big)^{\top} \tilde{X}_r \Big(\prod\nolimits_{k=t+1}^s M^{(k)}\Big)^{\top}. \label{eq_tv_ec}
\end{align}}%
Matrix \eqref{eq_tv_ec} encodes the edge centrality scores for every edge of $\cE^{(t)}$, at any step $t$.
Thus, we can develop an algorithm for time-varying networks as the static case.
The complete procedure is described in Algorithm \ref{alg_edgecen_temporal}.

\begin{algorithm}[!t]
\caption{Frank-Wolfe for Static Networks}\label{alg_edgecen}
	\begin{footnotesize}
		\begin{algorithmic}[1] %
		    \Input A graph $\cG = (\cV, \cE)$ with weight matrix $W$; Budget $B$.
		    \Param Rank $r$; Iterations $T$; Range of learning rate $H$.
		    \Output A weight matrix $M$ modified from $W$.
			\Procedure{\LP}{$W, B; T, H$}
			    \State Let $M_0 = W$
			    \For {$t = 0, 1, \dots, T-1$}
			        \State $G_t^{\star}$ = \edgecen($W, B; M_t$)
			        \State Set $\eta_t$ by minimizing $f\big((1 - \eta_t) M_t + \eta_t G_t^{\star}\big)$ for $\eta_t \in H$
			        \State $M_{t+1} = (1 - \eta_t) M_t + \eta_t  G_t^{\star}$
			    \EndFor
			    \If {there is unused budget in $M_T$}
			        \State $B' = B - \textup{sum}(W - M_T)$
			        \State $M^{\star}$ = \edgecen($M_T, B'; M_T$)
			    \EndIf
			    \State \Return $M^{\star}$
			\EndProcedure
			\vspace{0.1in}
			\Procedure{\edgecen}{$W, B; M$}
			    \State Let $\tilde M_r$ be the rank-$r$ SVD of $M$ \label{alg_ec}
			    \State Sort the edges in $\cE$ by their edge centrality scores from $\tilde M_r$; let $k$ be the first value such that the total top-$k$ edge weights in $W$ exceed $B$
			    \State Reduce the first $k-1$ edges' weight to zero and the last edge's weight by the remaining budget \label{alg_ec_red}
			    \State \Return the updated $W$
			\EndProcedure	
		\end{algorithmic}
	\end{footnotesize}
\end{algorithm}

\medskip
\noindent\textbf{Running time analysis:}
Similar to Theorem \ref{prop_continuous}, one can then prove that Algorithm \ref{alg_edgecen_temporal} is guaranteed to converge to the optimum solution of problem \eqref{eq_convex_temporal} at the rate of $O(T^{-1})$ after $T$ iterations.
The details of this extension can be found in Appendix \ref{sec_proof}.

\section{Experiments}\label{sec_exp}

We evaluate our proposed approaches on various weighted graphs and mobility networks.
Our experiments seek to address the following questions:
First, does our approach reduce the infections and the largest singular values well compared to methods from prior works? 
Second, what are the effects of each component in our approach, e.g., setting the rank $r$, running iterative greedy selection, and setting the budget?
Third, does our approach run efficiently in practice? 
We present positive results to answer these three questions, validating the practical benefit of our algorithm. {The code repository for reproducing our results can be found online at \url{https://github.com/NEU-StatsML-Research/Designing-Intervention-on-Mobility-Networks}.}

\begin{algorithm}[!t]
	\caption{Frank-Wolfe for Time-Varying Networks}\label{alg_edgecen_temporal}
	\begin{footnotesize}
		\begin{algorithmic}[1]
		    \Input A sequence of graphs with weight matrix $\cW$ in $s$ steps.
            \Param Same as the static case.
		    \Output A sequence of matrices $\cM$ modified from $\cW$.
			\Procedure{\LPTV}{$\cW, B; T, H$}
			    \State Let $\cM_0 = \cW$
			    \For {$t = 0, 1, \dots, T-1$}
			        \State $\mathcal{G}_t = \{G_t^{\star(i)}\}_{i=1}^s$ = \edgecenTV($\cW, B; \cM_t$)
			        \State Set $\eta_t$ by minimizing $f\big((1 - \eta_k)\cM_t + \eta_t \cG_k\big)$ for $\eta_k \in H$
			        \State $\cM_{t+1} = \{M_{t+1}^{(i)} = (1 - \eta_t) M_t^{(i)} + \eta_t  G_t^{\star(i)}: 1\le i\le s\}$
			    \EndFor
			    \If {there is unused budget in $\cM_T$}
			        \State $B' = B - \sum_{i=1}^{s}\textup{sum}(W^{(i)} - M_T^{(i)})$
			        \State $\cM^{\star}$ = \edgecenTV($\cM_T, B'; \cM_T$)
			    \EndIf
			    \State \Return $\cM^{\star}$
			\EndProcedure
			\vspace{0.0325in}
			\Procedure{\edgecenTV}{$\cW, B; \cM$}
			    \State Let $\tilde X_r$ be the rank-$r$ SVD of $X = \prod_{i=1}^{s}M^{(i)}$ \label{alg_ec_temporal}
			    \State Sort the edges in the \emph{union} of $\cE^{(1)}, \cE^{(2)}, \dots, \cE^{(s)}$ by their edge centrality scores (cf. Eq. \ref{eq_tv_ec}); let $k$ be the first value such that the total top-$k$ edge weights from $\cW$ exceed $B$
			    \State Reduce the first $k-1$ edges' weight to zero and the last edge's weight by the remaining budget 
			    \State \Return the updated $\cW$
			\EndProcedure	
	\end{algorithmic}
	\end{footnotesize}
\end{algorithm}

\subsection{Experimental setup}\label{sec_exp_setup}

We use three weighted graphs in our model simulations on static networks: (i) An airport traffic network of flights among commercial airports worldwide.  (ii) A trust network of users on the Advogato platform; (iii) A trust network of users on a Bitcoin platform.
The edge weights in the Airport network denote the number of flight routes between two airports.
Edge weights in the last two networks denote different levels of declared trust among users. The edge weights on the Advogato network are between $0$ and $1$. The edge weights on the Bitcoin network range from $-10$ to $10$. We scale the weights to positive by $\exp(w/5)$. 
The statistics of the networks are listed in Appendix \ref{sec_add_setup}.

\vspace{0.03in}

Besides, we use eight mobility networks constructed with the procedure described in \citet{chang2021mobility}. 
We generate the mobility networks based on the mobility patterns of eight cities. The edge weights denote the population that moves from a group to a location from March 2, 2020, to May 10, 2020. 
The mobility patterns cover 25,341 census block groups with over 65 million people and 147,638 points of interest. We report the statistics of the mobility networks in Table \ref{tab:num_infected}.
We defer a comprehensive discussion of the construction procedure to their paper.

\vspace{0.03in}

We use two sequences of weighted trust networks from Bitcoin-Alpha and Bitcoin-OTC platforms for time-varying networks. Each sequence contains ten trust relationship networks corresponding to five periods. The edge weights are processed in the same way as in the static Bitcoin network.
We also construct time-varying mobility networks corresponding to ten weeks of the same period above for Chicago and Houston.
We describe network data sources in Appendix \ref{sec_add_setup}. 

\begin{table*}[t!]
\centering
\caption{
\textbf{Top:} Dataset statistics for eight mobility networks.
\textbf{Middle:} Comparison of the largest singular value of the edge-weight reduced matrix.
\textbf{Bottom:} Comparison of the total number of infected populations ($\times 10^3$) in SEIR model simulations. 
We report the average number of infections from fifty independent simulations.}\label{tab:num_infected}
\begin{scriptsize}
\begin{tabular}{@{}lcccccccccc@{}}
\toprule
Graphs & AT & CH & DA & HO & MI & NY & PH & DC \\ \midrule
Nodes  & 11,232     & 32,390     & 19,069     & 38,895                  & 17,858        & 34,216   & 18,649 & 10,590             \\
Edges & 154,729   & 439,262   & 283,928   & 671,217                & 276,109      & 463,719  & 260,279                            & 107,733           \\
Avg. edge weight & 5.258 & 4.659 & 4.921 & 4.951 & 4.833 & 4.749 & 4.864 & 4.848 \\
\midrule\midrule
Largest singular value & AT & CH & DA & HO & MI & NY & PH & DC \\ \midrule
No Intervention     &  5526 & 1296 & 2093 & 14677 & 555 & 2413 & 12032 & 1406 \\
Uniform Reduction  &  5250 & 1231 & 1988 & 1394 & 527 & 2292 & 1143 & 1336 \\
Weighted Reduction &  1254 &  302 &  564 &  420 & 213 &  4818 &  374 &  365 \\
Max Capping  &  5250 & 1231 & 1988 & 1394 & 527 & 2292 & 1143 & 1336 \\
POI Category       &  5526 & 1295 & 2073 & 1467 & 555 & 2270 & 1202 & 1375 \\ 
K-EdgeDeletion     &  1565 &  257 &  417 &  447 & 216 &  355 &  282 &  227 \\
\edgecenshort{}    &  1565 &  257 &  417 &  447 & 216 &  355 &  282 &  226 \\
\textbf{Ours (Alg. \ref{alg_edgecen})}         &  \textbf{1191} & \textbf{125} & \textbf{308} & \textbf{235} & \textbf{169} & \textbf{197} & \textbf{190} & \textbf{188} \\\midrule \midrule
Infected populations                & AT & CH & DA & HO & MI & NY & PH & DC \\ \midrule
No Intervention  &  48$\pm$3 & 1858$\pm$46 & 91$\pm$21 & 366$\pm$26 & 752$\pm$26   & 3146$\pm$21 & 492$\pm$20 & 41$\pm$2 \\
Uniform Reduction   &  46$\pm$2 & 1762$\pm$64 & 84$\pm$11 & 312$\pm$26 & 671$\pm$23  & 2996$\pm$40 & 463$\pm$12 & 41$\pm$1 \\
Weighted Reduction    &  43$\pm$2 & 782$\pm$86 & 66$\pm$3 & 194$\pm$18 & 43$\pm$12  & 1336$\pm$60 & 342$\pm$10 & 40$\pm$1 \\
Max Capping    &  44$\pm$2 & 1741$\pm$65 & 82$\pm$8  & 315$\pm$33 & 675$\pm$26   & 2990$\pm$45 & 455$\pm$15 & 41$\pm$1 \\
POI Category    &  46$\pm$3 & 1728$\pm$62 & 77$\pm$8  & 283$\pm$31 & 687$\pm$25   & 2950$\pm$38 & 458$\pm$17 & 41$\pm$1 \\
K-EdgeDeletion & 44$\pm$2 & 346$\pm$40 & 64$\pm$2 & 186$\pm$18 & 78$\pm$8 & 352$\pm$27 & 185$\pm$10 & 39$\pm$1 \\
\edgecenshort{} &  45$\pm$3 & 355$\pm$46 & 64$\pm$2 & 187$\pm$21 & 78$\pm$7 & 362$\pm$36 & 178$\pm$11 & 39$\pm$1 \\
\textbf{Ours (Alg. \ref{alg_edgecen})}      &  \textbf{40$\pm$1} & \textbf{166$\pm$16} & \textbf{62$\pm$2} & \textbf{86$\pm$10} & \textbf{8$\pm$2} & \textbf{301$\pm$88} & \textbf{129$\pm$13}	& \textbf{39$\pm$1} 
\\\bottomrule
\end{tabular}
\end{scriptsize}
\end{table*}

\smallskip
\noindent\textbf{Baseline methods.}
The experiments of spreading on static networks involve the following baseline methods: 
(1) K-EdgeDeletion: Delete a set of edges with the highest edge centrality scores according to the best rank-1 approximation of $W$ \cite{tong2012gelling}. 
(2) Weighted reduction: Reduce the weight of every edge by a ratio that is proportional to its weight. (3) Uniform reduction: Uniformly reduce the weight of every edge by the same fraction.
(4) Max occupancy capping: Reduce the cumulative weights at each POI proportional to its max occupancy.
(5) Capping by POI category: Cap the maximum occupancy of a particular category of POIs.
The last three baselines are adapted from \citet{chang2021mobility}.

We consider a similar set of baseline methods for time-varying networks, including uniform reduction,  weighted reduction, and the K-EdgeDeletion method \cite{tong2012gelling}. 
The difference from methods on static networks is that edge weight reduction strategies are applied to all edges in the sequence of networks. 


\medskip
\noindent\textbf{Implementation.}
We simulate an SEIR model on each weighted network.
On weighted graphs, a node can get infected by its infectious neighbors with a probability equal to the edge weight times the transmission rate. 
We use a transmission rate of 0.05 and an initially exposed ratio of 0.01.
We follow the procedure of \citet{chang2021mobility} on mobility networks to simulate a metapopulation SEIR model in each network where one SEIR model is instantiated for each CBG. 
We calibrate the parameters of SEIR models so that the simulated cases approximate the reported cases from New York Times COVID-19 data. 
Besides, we also evaluate our algorithm on other variants of epidemic models, including SIR and SIS with the same parameters. 
We describe the simulation setup details in Appendix \ref{sec_add_setup}.
For completeness, a brief description of the epidemic models is provided in Appendix \ref{sec_epi_models}. 

In Algorithm \ref{alg_edgecen} and \ref{alg_edgecen_temporal}, we search the rank parameter $r$ in $[1, 50]$ and the number of iterations in $[5, 30]$. For each result reported in Section \ref{sec_exp}, we search the two hyper-parameters 50 times. 
We use an edge-weight reduction budget of 5\% of the total edge weights. Results of using other budget amounts are consistent and are discussed in Section \ref{sec_ablation}.
We use 30 values from the range of $[10^{-3}, 10^{-1}]$ as the range of learning rate $H$. 
For weighted graphs, we directly use the weight matrix as $W$. We compose the weight matrix $W$ for mobility networks by multiplying the bipartite network matrix and its transpose.
All the experiments are conducted on an AMD 24-Core CPU machine.

\subsection{Experimental results}\label{sec:main_results}
Our algorithms effectively control infections by reducing the largest singular value on a range of static and time-varying networks. We observe consistent results across various epidemic models, including SEIR, SIR, and SIS.
\vspace{-0.025in}
\begin{itemize}[leftmargin=0.15in]
\setlength\itemsep{0.0em}
\item {\bf Drop in the largest singular value:}
Figure \ref{fig_intro} illustrates the largest singular value of the modified weight matrix of the three weighted graphs.
\LPshort{} reduces the largest singular value more than baselines by \textbf{11.4}\% on average.
Additionally, Table \ref{tab:num_infected} reports the largest singular value of modified mobility networks.
\LPshort{} is $\mathbf{30.7\%}$ more effective than the best baseline  on average.

\item {\bf Reduced number of infections:}
Figure \ref{fig_intro} compares our algorithm to baseline intervention strategies on three weighted graphs.
Overall, our algorithm reduces the number of infected nodes by $\mathbf{10.4\%}$ more than baselines on average.
Table \ref{tab:num_infected} compares the total number of infected populations on eight mobility networks.
Note that ours outperform other baselines by $\mathbf{30.1\%}$ on average and up to $\mathbf{80.3\%}$.

\item {\bf Results for time-varying networks:}
On time-varying networks, \LPTVshort{} also outperforms other baselines.
The number of infections is smaller by $\mathbf{6.9\%}$ averaged over both  time-varying weighted graphs and mobility networks.

\item {\bf Simulation using SIS and SIR:} 
Our approach also helps reduce infections in SIR and SIS epidemic models. We observe that \LPshort{} reduces the number of infections by $\mathbf{14.7\%}$ and $\mathbf{10.8\%}$ more on average over the eight static mobility networks.
\end{itemize}

\subsection{Ablation studies}\label{sec_ablation}
%
We ablate the parameters in our approach and provide further insights into the properties of our algorithm. 
\begin{itemize}[leftmargin=0.15in]
\setlength\itemsep{0.0em}

\item {\itshape Benefit of choosing ranks:}
Recall that our algorithm requires specifying the rank $r$--the number of top singular values--in Equation \ref{eq_convex}. 
We hypothesize that varying the rank $r$ would lead to different intervention results. 
We ablate the performance of our algorithm by using different $r$ in a range of $[1, 50]$. The results show that the performance of the best choice $r$ outperforms using $r=1$ by \textbf{40.2\%} averaged over all networks.
This result justifies our formulation of the network intervention problem as an optimization for the sum of largest-$r$ singular values instead of only the largest single value.

\item {\itshape Benefit of being iterative:} The greedy selection algorithm \edgecenshort{} can be viewed as a special case of  \LPshort{} with $T = 1$.
Notice that our iterative approach is necessary to achieve the observed performance.
In Table \ref{tab:num_infected},  \LPshort{} outperforms \edgecenshort{} by \textbf{31.4}\% on average, and the largest singular value is reduced by \textbf{33.1}\% more.

\item {\itshape Varying budget $B$:}
We have also observed similar results by varying the budget for mobility reduction. 
We vary the budget from 1\% to 20\% using the New York mobility network.
Our algorithm outperforms the baselines consistently using different budgets, similarly for the largest singular value.
Interestingly, when the budget level is small (e.g., 1\%), \LPshort{} reduces the largest singular value more significantly than baseline methods.
\end{itemize}

\subsection{Runtime report}\label{sec:scalability}
%
Across all eleven graphs, our approach converges within 30 iterations (or 17 on average).
Each iteration requires an SVD step that takes less than 3 seconds.
The other steps in each iteration require less than 2.7 seconds.
For larger graph instances, we run our method on seven graphs with the number of edges included: com-Orkut (117M), com-LiveJournal (34M), wiki-topcats (28M), web-BerkStan (7.6M), web-Google (5.1M), web-Stanford (2.3M), and web-NotreDame (1.4M) from the SNAP datasets.
Figure \ref{fig:runtime_per_iteration} reports the runtime for one iteration of our algorithm.
Notice that the runtime scales almost linearly with the number of edges. Our algorithm takes 4943 seconds on the largest graph with 117M edges and 3M nodes.
These results show that our algorithm runs efficiently on large-scale graphs.

\begin{figure}[!h]
	\centering
	\includegraphics[width=0.3234\textwidth]{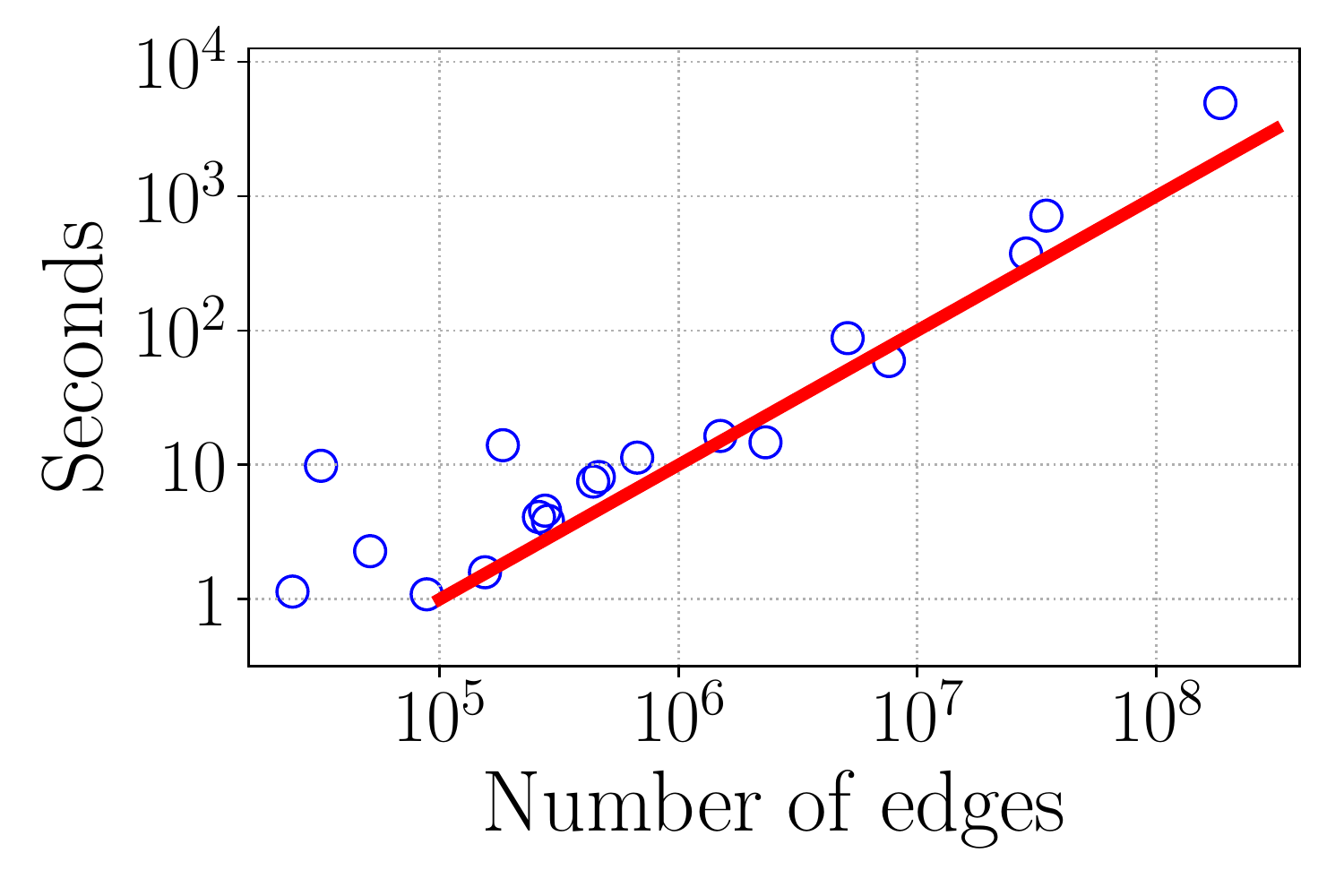}
	\caption{Runtime of Frank-Wolfe-EC in log-log scale for one iteration. The number of  edges ranges from $10^4$ to $10^8$, and the number of  nodes ranges from $10^3$ to $10^6$.}
	\label{fig:runtime_per_iteration}
\end{figure}
\section{Related Work}\label{sec_related}

There is a significant amount of work on diffusion processes on networks.
A detailed survey from an epidemic perspective can be found in  \citet{pastor2015review}.
A key result in the literature is that the largest eigenvalue of the adjacency matrix (a.k.a.~the spectral radius) characterizes the epidemic threshold for many propagation models \cite{wang2003epidemic,ganesh2005effect,prakash2012threshold}.
An important implication of this result is that the epidemic dies out if the spectral radius decreases, and this has motivated many works on epidemic control \cite{van2011decreasing,le2015met,chen2016eigen}.
Because eigen-optimization problems via edge additions or deletions are NP-hard \cite{khalil2014scalable}, both heuristic solutions and principled approximation algorithms have been investigated.
A practical approach in the literature is following the greedy algorithm with a node centrality \cite{chen2015node} or edge centrality notion \cite{tong2012gelling} (see also several alternative edge centrality notions in link recommendation \cite{parotsidis2016centrality} and distance sketching \cite{zhang2019pruning}).
Our notion of edge centrality follows the edge centrality notion studied in \citet{tong2012gelling}.
A related literature studies diffusion control in the Firefighter problem \cite{anshelevich2009approximation}.
Besides epidemic spreading, diffusion processes are also studied in social networks (e.g., \cite{matsubara2012rise,goel2015note,haghtalab2017monitoring}), and financial transaction networks \cite{goel2014connectivity}.

Our work applies the Frank-Wolfe algorithm, a classic algorithm for constrained optimization  \cite{frank1956algorithm,nocedal2006numerical} to study graph spectral optimization.
The Frank-Wolfe algorithm and its theoretical property are well-studied in the machine learning and optimization literature (see, e.g., \citet{jaggi2013revisiting}, \citet{tajima2021frank}, and the references therein). %
We observe a connection between edge centrality and gradients which significantly speeds up the Frank-Wolfe algorithm compared with a naive implementation using a linear program solver.
One relevant application for our approach is to consider node-level intervention measures.
For mobility networks, reducing the weight of a node means restricting a particular group or location's mobility.
Our approach can naturally extend to node-level reduction by similarly deriving node centrality scores as gradients.
Besides, there are also methods for speeding up eigenscore computation on dynamic graphs \cite{chen2017eigen,zhang2016approximate}. It is conceivable that one could combine this method with our approach to achieve the best of both worlds.
Finally, there are studies on the design of vaccine distribution for pandemic control \cite{zhang2014scalable,sambaturu2020designing} and optimization for network robustness \cite{chan2016optimizing}.
It would be interesting to use the new tools developed in this paper to study these related problems.

\section{Conclusion}\label{sec_discuss}

This work considered controlling diffusion processes on weighted graphs.
We study minimizing the largest eigenvalues of the graph and design an efficient algorithm that is guaranteed to converge to the global minimum.
We observe a connection between edge centrality scores and gradients, which provides a new way to derive graph spectral optimization algorithms.
We show how to derive them for static and time-varying networks.
Experiments show that our algorithms are effective on various epidemic models and weighted graphs.

We mention two open questions for future work.
First, it would be interesting to understand better the metapopulation SEIR model of \citet{chang2021mobility} such as its epidemic threshold.
Second, it would be interesting to understand better how eigenvalues affect the diffusion process besides $\lambda_1$. 
We hope our work inspires further algorithmic and theoretical studies about epidemics.

\section*{Acknowledgement}

Thanks to Bijaya Adhikari for bringing reference \cite{prakash2010virus} to the authors' attention.
Thanks to Aditya Prakash for the helpful discussions.
We thank the anonymous referees for their constructive feedback.
DL acknowledges the financial support from the startup fund and a seed/proof-of-concept grant from the Khoury College of Computer Sciences, Northeastern University.

\begin{refcontext}[sorting=nyt]
\printbibliography
\end{refcontext}

\appendix
\clearpage
\onecolumn
\section{Complete Proofs}\label{sec_proof}

This section lays out the proofs for our statements in Section \ref{sec_alg}.
For ease of reading, we include a list of notations needed in the proofs below.

\begin{table}[h]
\centering
\caption{A table of notations used in our paper for reference.}\label{table_notations}
{\footnotesize \begin{tabular}{@{}ll@{}}
        \toprule
        Symbol & Definition \\ \midrule
$\cG = (\cV, \cE)$     &  Weighted and possibly directed graph      \\ 
$W$             &    A nonnegative weight matrix of $\cG$    \\
$W_{i,j}$       & The $(i, j)$-th entry of $W$ \\
$\lambda_k(W)$  &  The $k$-th largest singular value of a matrix $W$ \\
$\vec u_k$ & The left singular vector of a weight matrix $W$ corresponding to $\lambda_k(W)$ \\
$\vec v_k$ & The right singular vector of a weight matrix $W$ corresponding to $\lambda_k(W)$ \\
$\vec v(i)$ & The $i$-th coordinate of the vector $\vec v$ \\
$\tilde X_r$    & The best rank-$r$ approximation of $X$ \\
$\cW$          &  A sequence of weight matrices from timestamp $1$ to $s$ \\
$\cE^{(t)}$     &   The set of edges in the $t$-th graph of the sequence \\
$W^{(t)}$       &   A nonnegative square weight matrix for the graph at timestamp $t$ \\
$\norm{\cdot}$    & The $\ell_2$ norm of a vector or the spectral norm of a matrix \\
$\bignormFro{\cdot \ }$   & The Frobenius norm of a matrix  \\
$\langle \cdot , \cdot \rangle$  & The matrix inner product between two matrices 
            \\\bottomrule
\end{tabular}}
\end{table}

\subsection{Proof for the iterative greedy algorithm}
First, we prove the connection between generalized edge centrality and gradients.

\medskip
\noindent\textbf{Proof of Lemma \ref{prop_grad}.}
    Consider a singular value $\lambda_k$ of $X$, for any $k$.
    Let $\vec u_k$ and $\vec v_k$ be the left and right singular vectors of $X$ corresponding to $\lambda_k$, respectively.
    By the chain rule, it suffices to show that $\frac{\partial \lambda_k(X)}{\partial X_{i,j}} = \vec u_k(i) \cdot \vec v_k(j)$.
    First, we have
    $\vec u_k^{\top} X = \lambda_k  \vec v_k^{\top}.$
    We differentiate over $X$ on both sides of the above equation:
    \begin{align}
        \der(\vec u_k^{\top}) X + \vec u_k^{\top} \der(X) = \der(\lambda_k) \vec v_k^{\top} + \lambda_k \der(\vec v_k^{\top}). \label{eq_partial}
    \end{align}
    Since $\vec v_k$ is a unit length vector,
    \begin{align}
        \der(\norm{\vec v_k}^2) = 2\inner{\vec v_k}{\der(\vec v_k)} = 2\der(\vec v_k^{\top}) \vec v_k = 0. \label{eq_orth}
    \end{align}
    Thus, by multiplying both sides of equation \eqref{eq_partial} with $\vec v_k$, we  get
    \begin{align}
        \der(\vec u_k^{\top}) X \vec v_k + \vec u_k^{\top} \der(X) \vec v_k
        = \der(\lambda_k) \vec v_k^{\top} \vec v_k + \lambda_k \der(\vec v_k^{\top}) \vec v_k, \label{eq_multiply}
    \end{align}
    which is equal to $\der(\lambda_k)$ since equation \eqref{eq_orth} holds and $v_k$ is a unit length vector.
    Looking at equation \eqref{eq_multiply}, we observe
    \begin{align}
        \der(\vec u_k^{\top}) X \vec v_k = \der(\vec u_k^{\top}) \lambda_k \vec u_k = \lambda_k \der(\vec u_k^{\top}) \vec u_k = 0, \label{eq_i_j}
    \end{align}
    where the last step follows similarly to equation \eqref{eq_orth}, since $\vec u_k$ is also a unit length vector.
    In summary, we have shown
    $\vec u_k^{\top} \der(X) \vec v_k = \der(\lambda_k)$.
    This implies that the derivative of $\lambda_k$ over $X_{i,j}$ is equal to $\vec u_k(i) \cdot \vec v_k(j)$.
    Since this holds for any $k$, we thus conclude that equations \eqref{eq_edge_cen_1} and \eqref{eq_edge_cen_r} are both true.\hfill$\square$

\subsection{Proof for the running guarantee}

Next, we derive the convergence guarantee of Algorithm \ref{alg_edgecen}.

\medskip
\noindent\textbf{Proof of Theorem \ref{prop_continuous}.}
    We complete the convergence analysis of our algorithm.
    First, we show that the objective function $f(M)$ is convex in $M$.
    Second, we invoke the result of \citet{jaggi2013revisiting}, specifically Lemma 7 and Theorem 1, which show that as long as the gradient $\nabla f(M)$ is Lipschitz-continuous and the constraint set has bounded diameter, the Frank-Wolfe algorithm will converge to the optimum at a rate of $O(\frac 1 t)$ after $t$ iterations.

    We first show that the sum of top singular values $g(M) = \sum_{k=1}^r \lambda_k(M)$ is convex.
    With the variational characterization of singular values, $g(M)$ is equal to
    \begin{align}\label{eq_char}
        g(M) = \max_{U^{\top}U = V^{\top} V = \id_r:~U\in\real^{n\times r}, V\in\real^{m\times r}} \inner{UV^{\top}}{M}.
    \end{align}
    Thus, for any $n$ by $m$ matrix $M_1,M_2$, and any $\alpha \in [0, 1]$, let $\tilde U$ and $\tilde{V}$ be the maximizer of the above for $f\big(\alpha M_1 + (1 - \alpha) M_2\big)$.
    Therefore,
    \begin{align*}
        g\big(\alpha M_1 + (1 - \alpha) M_2\big)
        &= \inner{\tilde U \tilde V^{\top}}{\alpha M_1 + (1- \alpha) M_2} \\
        &\le \alpha \inner{\tilde U \tilde V^{\top}}{M_1} + (1-\alpha) \inner{\tilde U \tilde V^{\top}}{M_2} \\
        &\le \alpha g(M_1) + (1 - \alpha) g(M_2),
    \end{align*}
    which implies that $g(M)$ is convex. Next, we show that $f(M)$ is convex.
    For any $\alpha \in [0, 1]$,
    \begin{align*}
        &f(\alpha M_1 + (1- \alpha) M_2)
        = g\Big( (\alpha M_1 + (1- \alpha) M_2)^T(\alpha M_1 + (1- \alpha) M_2) \Big)\\
        &\le \alpha^2 g(M_1^{\top}M_1) + (1-\alpha)^2 g(M_2^{\top}M_2)
        + 2\alpha(1-\alpha) g(M_1^{\top}M_2).
    \end{align*}
    Let $\tilde U$ and $\tilde V$ be the maximizer of \eqref{eq_char} for $M_1^{\top} M_2$. We have
    \begin{align*}
        &2g(M_1^{\top} M_2)
        = 2\inner{\tilde U \tilde V^{\top}}{M_1^{\top} M_2}
        = 2\inner{M_1 \tilde U}{M_2 \tilde V} \\
        \le& \bignormFro{M_1 \tilde U}^2 + \bignormFro{M_2 \tilde V}^2 = \inner{M_1^{\top} M_1}{\tilde U \tilde U^{\top}} + \inner{M_2^{\top} M_2}{\tilde V \tilde V^{\top}} \\
        \le& g(M_1^{\top} M_1) + g(M_2^{\top} M_2).
    \end{align*}
    Therefore, $f(\alpha M_1 + (1-\alpha)M_2)$ is less than $\alpha \cdot g(M_1^{\top} M_1) = \alpha \cdot f(M_1)$ plus $(1-\alpha)\cdot g(M_2^{\top} M_2) = (1-\alpha) \cdot f(M_2)$.

    Second, we verify that $\nabla f(M)$ is $\alpha_2$ Lipschitz continuous in the Frobenius norm.
    The proof is based on matrix perturbation bounds.
    Let $\tilde M = M + E$ be a perturbation of $M$.
    Let $M_r = U_r D_r V_r^{\top}$ be the top-$r$ SVD of $M$.
    Let $\mu_1$ be the largest singular value of $M$.
    Let $\tilde M_r = \tilde U_r \tilde D_r \tilde V_r^{\top}$ be the top-$r$ SVD of $\tilde M$.
    First, consider $\norm{E}_2 \le \kappa / 2$.
    By matrix perturbation bounds on the truncated SVD of a matrix (e.g., Theorem 1 of \citet{vu2021perturbation}; the condition is satisfied since $\kappa$ is the spectral gap between the $r$-th and $(r+1)$-th largest singular values), we have
    \begin{align*}
        \normFro{M_r - \tilde M_r}^2
        \le 2\normFro{E}^2 + \frac{4\lambda_1^2}{\kappa^2}\normFro{E}^2 + C\normFro{E}^2.
    \end{align*}

     When $\norm{E}_2 \ge \kappa/2$, notice that
     \begin{align*}
         \normFro{M_r - \tilde M_r}^2 &= \normFro{U_r D_r V_r^{\top} - \tilde U_r \tilde D_r \tilde V_r^{\top}}^2 \\
         &\le 2\normFro{D_r}^2 + 2\normFro{\tilde D_r}^2 \\
         &\le 2r \lambda_1^2 + 2r(\lambda_1 + \norm{E}_2)^2,
     \end{align*}
     which is at most $2r(3\lambda_1^2 +  2\norm{E}_2^2)$.
     The step above uses the Weyl's Theorem that $\norm{D_r - \tilde D_r}_2 \le \norm{E}_2$.
     Taken together, we conclude that $\nabla f(M)$ must be
     \[ \sqrt{\max\Big(2 + \frac{4 \lambda_1^2}{\kappa^2} + C, \frac{24r \cdot \lambda_1^2}{\kappa^2} + 4r \Big)} \]
     Lipschitz-continuous.
     Lastly, the diameter of the constraint set is at most $\sqrt{\sum_{(i,j)\in\cE} W_{i,j}^2}$, since for every $(i,j)\in\cE$, the search space is bounded between $0$ and $W_{i,j}$.
     Taken together, we have proved that:
     $f(M)$ is convex, $\nabla f(M)$ is $\alpha_2$ Lipschitz continuous, and the diameter of the constrained space of problem \eqref{eq_convex} is $\sqrt{\alpha_1/8}$.
     Using Lemma 7 and Theorem 1 of \citet{jaggi2013revisiting}, the proof is complete.\hfill$\square$

\medskip
\noindent\textbf{Extension to time-varying networks.} Notice that the time-varying extension is a special case of the above result.
Therefore, the same convergence rate of $O(T^{-1})$ holds for Algorithm \ref{alg_edgecen_temporal} towards the global optimum of problem \eqref{eq_convex_temporal}.

\medskip
Lastly, we derive the gradient of the largest $r$ eigenvalues of $X^{\top}X$ where $X$ is the product of the weight matrices in the sequence of time-varying networks (cf. Section \ref{sec_tv}).

\medskip
\subsection{Derivation of the time-varying case: Equation \ref{eq_tv_ec}.} Let $\set{M^{(1)}, M^{(2)}, \dots, M^{(s)}}$ be a sequence of modified weight matrices and $X = \prod_{t=1}^{s}M^{(t)}$. Following Lemma \ref{prop_grad}, we derive the gradient of the largest $r$ eigenvalues of $X^{\top}X$ with respect to $M^{(t)}_{i, j}$, for any $1\le i,j\le n$.
By the chain rule, we have:
{\small\begin{align}
     \frac{\partial f(\cM)}{\partial M_{i,j}^{(t)}} 
    = \Big\langle{\frac{\partial\Big(\sum_{k=1}^r \big(\lambda_k(X)\big)^2 \Big)}{\partial X}}, {\frac{\partial X}{\partial M_{i,j}^{(t)}}}\Big\rangle. \label{eq_tv}
\end{align}}

\noindent Notice that the first term above on the right is precisely the edge centrality scores we have derived in Lemma \ref{prop_grad}.
The second term is essentially the product of the rest of the weight matrices in $\cW$ except that $M^{(t)}$ is replaced by an indicator matrix, which is the derivative of $M^{(t)}$ with respect to its $(i, j)$-th entry.

Let $\tilde X_r = U_r D_r V_r^{\top}$ be the rank-$r$ SVD of $X$. Let the product of weight matrices from $1$ to $t-1$ as $A = \prod_{k=1}^{t-1} M^{(k)}$ and the product of weight matrices from $t+1$ to $s$ as $B = \prod_{k=t+1}^{s} M^{(k)}$. $A$ is equal to identity matrix when $t = 1$, and $B$ is equal to identity matrix when $t = s$. Let $J^{i, j}$ as a single-entry indicator matrix where its $(i, j)$-th entry is $1$, and the rest of the entries are equal to 0. Then, we can rewrite the gradient as follows:
{\small\begin{align}
     \frac{\partial f(\cM)}{\partial M_{i,j}^{(t)}} 
    = 2 \Big\langle{\tilde X_r}, {AJ^{i,j}B}\Big\rangle 
    = 2 \sum_{1 \leq p,q \leq n} {\big( \tilde X_r \big )}_{p,q} {\big( AJ^{i,j}B \big)}_{p,q}
    = 2 \sum_{1 \leq p,q \leq n} {\big( \tilde X_r \big )}_{p,q} A_{p, i} B_{j, q}  
    = 2 \Big( A^{\top} {\tilde X_r } B^{\top} \Big)_{i, j}
\end{align}}

\noindent Thus, we get the gradient of $f(\cM)$ with respect to the weight matrix $M^{(t)}$ as follows:
{\small\begin{align}
    \frac{\partial f(\cM)}{\partial M^{(t)}}
    =2 A^{\top} {\tilde X_r } B^{\top}
    =2\Big(\prod\nolimits_{k=1}^{t-1} M^{(k)}\Big)^{\top} \tilde{X}_r \Big(\prod\nolimits_{k=t+1}^s M^{(k)}\Big)^{\top}.
\end{align}}%
The derivation of statement \eqref{eq_tv_ec} is now completed. \hfill$\square$

\section{Epidemic Models}\label{sec_epi_models}

We describe the epidemic models that are considered in our experiments. 
One widely used model of epidemic spread is the SEIR compartmental model.
An SEIR model uses four compartments to capture a spreading process: Susceptible (S), Exposed (E), Infected (I), and Recovered (R).
Every node must belong to one of the four states during the process.
At every time $t$,
\begin{itemize}[leftmargin=1cm]
    \setlength\itemsep{0.00em}
	\item $S^{(t)}$ denotes the set of susceptible nodes at time $t$. A node may get exposed if its incoming neighbors are infectious. The probability depends on the edge weights and the virus transmission rate.
	\item $E^{(t)}$ denotes the nodes exposed to the virus but are not infectious at time $t$. In expectation, a node remains exposed for $\delta_E$ periods.
	\item $I^{(t)}$ denotes the nodes who are infectious at time $t$. Each node remains infectious for $\delta_I$ periods in expectation.
	\item $R^{(t)}$ denotes the nodes who have recovered at time $t$.
\end{itemize}

For weighted graphs, we simulate an SEIR model. At each time $t$, we calculate the infection probability for node $i$ based on the edge weights and transmission rate $\beta_{\text{Base}}$:
{\small
$$
p_i = 1 - \prod_{(i,j)\in E: j \in I^{(t)}} \max\Big(1 - W_{i,j}\beta_{\text{base}}, 0\Big).
$$
}

We follow the procedure in \citet{chang2021mobility} for mobility networks to simulate the metapopulation SEIR model.
At time $t$, the transitions between the four states (for $c_i$) are sampled as follows:
\begin{small}
\begin{align}
	N^{(t)}_{S_{c_i} \rightarrow E_{c_i}} &\sim \pois\Big( \frac{S_{c_i}^{(t)}}{N_{c_i}} \lambda^{(t)} \Big) + \Binom\Big(S_{c_i}^{(t)}, \lambda_{c_i}^{(t)}\Big). \label{eq_S_E} \\
	N_{E_{c_i} \rightarrow I_{c_i}}^{(t)} &\sim \Binom\Big(E_{c_i}^{(t)}, \frac{1}{\delta_E}\Big). \label{eq_E_I} \\
	N_{I_{c_i} \rightarrow R_{c_i}}^{(t)} &\sim \Binom\Big(I_{c_i}^{(t)}, \frac{1}{\delta_I}\Big). \label{eq_I_R}
\end{align}
\end{small}

\noindent where $\lambda^{(t)}$ is the aggregate transmission rate over the points of interest; $\lambda_{c_i}^{(t)}$ is the base transmission rate within $c_i$; $\latent$ represents the mean latency period; $\recover$ is the mean infectious period.

In equation \eqref{eq_S_E},
$\lambda_{c_i}^{(t)}$ is given by the product of the base transmission rate $\beta_{\base}$ and the proportion of infectious individuals in CGB $c_i$: 
    $\lambda_{c_i}^{(t)} = \beta_{\base} \frac{I_{c_i}^{(t)}}{N_{c_i}}.$
The infection rate across all the POIs is 
{\small
$$
    \lambda^{(t)} = \sum_{j=1}^n \lambda_{p_j}^{(t)} W_{i,j}^{(t)}; \ \lambda_{p_j}^{(t)} = \beta_{p_j}^{(t)} \frac{ I_{p_j}^{(t)}}{\sum_{i=1}^m W_{i, j}^{(t)} }.
$$}

\noindent where $\lambda_{p_j}^{(t)}$ is the infection rate for POI $p_j$  at time $t$.
$\beta_{p_j}^{(t)}$ is the transmission rate at POI $p_j$ and ${I_{p_j}^{(t)}}$ is the number of infectious individuals in $p_j$ at time $t$. The parameters are estimated as follows.
(i) $\beta_{p_j}^{(t)}$ is estimated by the physical area of $p_j$: 
$
\beta_{p_j}^{(t)} = \psi \cdot \cdot d_{p_j}^2 \cdot \frac{V_{p_j}^{(t)}}{a_{p_j}}$
in which $\psi$ is a transmission constant;
$a_{p_j}$ is the physical area of $p_j$;
$V_{p_j}^{(t)} = \sum_{i=1}^m W_{i,j}^{(t)}$ represents the number of visitors to $p_j$ at time $t$.
(ii) $I_{p_j}^{(t)}$ is estimated in proportion to the infectious population from each CBG and their number of visits to $p_j$: 
$
I_{p_j}^{(t)} = \sum_{k=1}^m \frac{I_{c_k}^{(t)}}{N_{c_k}} W_{k,j}^{(t)}.
$

The SEIR model has many variants (cf. \citet{prakash2012threshold}). We consider SIR and SIS that share similar spreading processes as the SEIR model. We describe their differences as follows.
The SIR model uses three compartments as the SEIR model except for the exposed state. It assumes that there is no latent period of the disease. Nodes are capable of infecting susceptible nodes directly after being infected.
The SIS model uses two states (Susceptible and Infectious) in a spreading process. It assumes that recovery does not bring immunity and nodes who have recovered will become susceptible again. 

\section{Experiment Details}\label{sec_add_setup}

\medskip
\noindent\textbf{Simulation setup.} 
For the weighted graphs, we simulate an SEIR model on each graph.
We use a transmission rate $\beta_{\text{Base}} = 0.05$ and a initial exposed ratio $p_0 = 0.01$.
To avoid infecting all the graph nodes, we simulate for 50 epochs. We use a slightly higher edge-weight reduction budget as 20\% of the total edge weights because the average edge weight in these three graphs is smaller than the mobility networks.

For the experiments concerning mobility networks, we follow the procedures of \citet{chang2021mobility} to simulate a metapopulation SEIR model in each network.
We calibrate the parameters of the SEIR model following their method.
We simulate 100 epochs on static mobility networks to be consistent with the simulation of \citet{chang2021mobility}.
The results are consistent throughout the simulation.
We compare the \LPshort{} algorithm with baseline methods using an edge-weight reduction budget as 5\% of the total edge weights. The results of using other budget amounts are consistent. We use the same set of parameters for SIR and SIS model simulations.

In time-varying mobility networks experiments, we simulate the metapopulation SEIR model on a sequence of ten networks for 70 epochs for every network. We set the edge-weight reduction budget as 5\% of the total edge weights of the sequence.

\medskip
\noindent\textbf{Model validation.} 
We calibrate the following parameters for the metapopulation SEIR model on mobility networks: (i) the transmission constant in POIs, $\psi$; (ii) the base transmission rate, $\beta_{\text{base}}$; and (iii) the ratio of initially exposed individuals, $p_0$.
We use grid search to find the parameters with the smallest root mean square error compared to the reported number of infected cases.  
We calibrate an SEIR model for every MSA independently.
We compare the predicted cases of our simulated SEIR model with the reported cases from New York Times COVID-19 data.
The root mean squared error of all the epochs is 295.17, averaged over eight mobility networks. The error is within $3\%$ compared to the overall infected population at $10^4$. These results reaffirm the finding of \citet{chang2021mobility}.

\medskip
\noindent\textbf{Data availability.}
The three weighted graphs are available in the following sources: Airport\footnote{\url{http://opsahl.co.uk/tnet/datasets/openflights.txt}}, Adavogato\footnote{\url{https://downloads.skewed.de/mirror/konect.cc/files/download.tsv.advogato.tar.bz2}}, and Bitcoin\footnote{\url{http://snap.stanford.edu/data/soc-sign-bitcoinalpha.html}}.
The two weighted time-varying graphs are available in the following sources: Bitcoin-Alpha\footnote{\url{https://snap.stanford.edu/data/soc-sign-bitcoinalpha.csv.gz}} and Bitcoin-OTC\footnote{\url{https://snap.stanford.edu/data/soc-sign-bitcoinotc.csv.gz}}. We report the network statistics in Table \ref{table_web}. 
The mobility network data is freely available to researchers, non-profit organizations, and governments through the SafeGraph COVID-19 Data Consortium.\footnote{\url{https://www.safegraph.com/covid-19-data-consortium}}
The construction of mobility networks requires the following data sources:
(i) Mobility patterns from the Monthly Pattern\footnote{\url{https://docs.safegraph.com/docs/monthly-patterns}} and Weekly Pattern datasets, \footnote{\url{https://docs.safegraph.com/docs/weekly-patterns}}
(ii) The geometry dataset,\footnote{\url{https://docs.safegraph.com/docs/geometry-data}},
(iii) The Open Census Dataset\footnote{\url{https://docs.safegraph.com/docs/open-census-data}}, and
(iv) The New York Times COVID-19 data.\footnote{\url{https://github.com/nytimes/covid-19-data}} 

\begin{table}[t!]
\caption{\textbf{Left:} Dataset statistics for three weighted graphs. \textbf{Right:} Dataset statistics for four time-varying networks. Each time-varying network sequence has ten networks.}\label{table_web}
\begin{minipage}[b]{0.43\textwidth}
\centering
{\footnotesize\begin{tabular}{@{}lccc@{}}
        \toprule
                & Airport & Advogato & Bitcoin       \\ \midrule
            Nodes     & 7,977 & 6,541     & 3,783                 \\
            Edges     & 30,501 & 51,127    & 24,186    \\
            Avg. edge weight & 1.45 & 0.83 & 1.46      \\ \bottomrule
\end{tabular}}
\end{minipage}
\begin{minipage}[b]{0.55\textwidth}
\centering
{\footnotesize\begin{tabular}{@{}lcccc@{}}
        \toprule
                 & Bitcoin-Alpha & Bitcoin-OTC & Chicago & Houston  \\\midrule
Nodes            & 3,783 & 5,881 & 32,390 & 38,895                \\
Edges            & 24,186 & 35,591 & 975,569  & 1,586,683  \\
Avg. edge weight & 1.46 & 1.51 & 4.27 & 4.42   \\ \bottomrule
\end{tabular}}
\end{minipage}
\end{table}

\end{document}